\begin{document}

\title{An optimal representation for the trace zero subgroup}

\author{Elisa Gorla}
\address{Elisa Gorla, Institut de math\'ematiques, Universit\'e de Neuch\^atel, Rue Emile-Argand 11, 2000 Neuch\^atel, Switzerland}
\email{\href{mailto:elisa.gorla@unine.ch}{elisa.gorla@unine.ch}}
\thanks{The authors were partially supported by the Swiss National Science Foundation under grants no.\ 123393, 150207, and 151884.}

\author{Maike Massierer}
\address{Maike Massierer, School of Mathematics and Statistics, University of New South Wales, Sydney NSW 2052, Australia}
\email{\href{mailto:maike@unsw.edu.au}{maike@unsw.edu.au}}

\subjclass[2010]{primary: 14G50, 11G25, 14H52, secondary: 11T71, 14K15}
\keywords{Elliptic and hyperelliptic curve cryptography, pairing-based cryptography, discrete logarithm problem, trace zero variety, efficient representation, point compression}

\begin{abstract}
We give an optimal-size representation for the elements of the trace zero subgroup of the Picard group of an elliptic or hyperelliptic curve of any genus, with respect to a field extension of any prime degree. The representation is via the coefficients of a rational function, and it is compatible with scalar multiplication of points. We provide efficient compression and decompression algorithms, and complement them with implementation results. We discuss in detail the practically relevant cases of small genus and extension degree, and compare with the other known compression methods.
\end{abstract}

\maketitle

\section{Introduction} \label{sec:intro}

Public key cryptography provides methods for secure digital communication. Among all public key cryptosystems, a relevant role is played by those based on the discrete logarithm problem (DLP). Such cryptographic systems work in finite groups which must satisfy three basic requirements: Computing the group operation must be efficient, the DLP must be hard, and there must be a convenient and compact representation for the elements. 

One such group is the trace zero subgroup of the Picard group of an elliptic or hyperelliptic curve. Given a curve defined over a finite field $\Fq$ and a field extension $\Fqn|\Fq$ of prime degree $n$, the trace zero subgroup consists of all $\Fqn$-rational divisor classes of trace zero.
While it has long been established that the trace zero subgroup provides efficient arithmetic and good security properties, an efficient representation was only known for special parameters. We bridge this gap by proposing an optimal-size representation for the elements of trace zero subgroups associated to elliptic curves and hyperelliptic curves of any genus, with respect to field extensions of any prime extension degree.

The trace zero subgroup can be realized as the $\Fq$-rational points of the trace zero variety, an abelian variety built by Weil restriction from the original curve. It was first proposed in the context of cryptography by Frey \cite{frey-99} and further studied by Naumann \cite{naumann}, Weimerskirch \cite{weimerskirch}, Blady \cite{blady}, Lange \cite{lange-phd,lange-04}, Rubin--Silverberg \cite{rubin-silverberg-02,rubin-silverberg-09}, Silverberg \cite{silverberg-05}, Avanzi--Cesena \cite{avanzi-cesena-07}, Cesena \cite{cesena-06,cesena-10}, and Diem--Scholten \cite{diem-scholten}, among others. Although the trace zero subgroup is a proper subgroup of the $\Fqn$-rational points of the Jacobian of the curve, it can be shown that solving the DLP in the Jacobian can be reduced to solving the DLP in the trace zero subgroup. Therefore, trace zero cryptosystems may be regarded as the (hyper)elliptic curve analog of torus-based cryptosystems such as LUC \cite{luc}, Gong--Harn \cite{gong-harn-99}, XTR \cite{xtr}, and CEILIDIH \cite{rubin-silverberg-03}. 

The trace zero subgroup is of interest in the context of pairing-based cryptography. Rubin and Silverberg have shown in \cite{rubin-silverberg-02,rubin-silverberg-09} that the security of pairing-based cryptosystems can be improved by using abelian varieties of dimension greater than one in place of elliptic curves. Jacobians of hyperelliptic curves and trace zero varieties are prominent examples for such applications. 

Scalar multiplication in the trace zero subgroup is particularly efficient, due to a speed-up using the Frobenius endomorphism, see \cite{lange-phd, lange-04,avanzi-cesena-07}. This technique is similar to the one used on Koblitz curves \cite{koblitz-91} and has been afterwards applied to GLV/GLS curves \cite{glv,gls}, which are the basis for several recent implementation speed records for elliptic curve arithmetic \cite{longa-sica-12,faz-hernandez-13,bos-costello-hisil-lauter-13}. In~\cite{avanzi-cesena-07}, Avanzi and Cesena show that trace zero subgroups often deliver better scalar multiplication performance than elliptic curves. 
E.g., scalar multiplication in trace zero subgroups of elliptic curves over a degree 5 extension field is almost 3 times faster than in elliptic curves, for the same group size. 

Since the trace zero subgroup is a subgroup of the Picard group, one may represent its elements in the same way as one represents the elements of the Picard group. Such a representation, however, sacrifices memory and bandwidth. In this paper, we solve this problem by providing a representation for the elements of trace zero subgroups which is both efficiently computable and optimal in size.
Since the trace zero subgroup has about $q^{(n-1)g}$ elements, an optimal-size representation should consist of approximately $\log_2 q^{(n-1)g}$ bits. A natural solution would be representing an element of the trace zero subgroup via $(n-1)g$ elements of $\Fq$. Such representations have been proposed by Naumann \cite[Chapter 4.2]{naumann} for trace zero subgroups of elliptic curves and by Lange \cite{lange-04} for trace zero varieties associated to hyperelliptic curves of genus 2, both with respect to cubic field extensions, and 
by Silverberg \cite{silverberg-05} and Gorla--Massierer \cite{gorla-massierer-1} for elliptic curves with respect to base field extensions of degree 3 and 5. A compact representation for Koblitz curves has been proposed by Eagle, Galbraith, and Ong \cite{eagle-galbraith-ong-11}.

In this paper we give a new optimal-size representation for the elements of the trace zero subgroup associated to an elliptic or hyperelliptic curve of any genus $g$ and any field extension of prime degree $n$. It is conceptually different from all previous representations, and it is the first representation that works for elliptic curves with $n>5$, for hyperelliptic curves of genus 2 with $n>3$, and for hyperelliptic curves of genus $g>2$. The basic idea is to represent a given divisor class via the coefficients of the rational function whose associated principal divisor is the trace of the given divisor. Our representation enjoys convenient properties, e.g., modulo the action the Frobenius the representation is injective and scalar multiplication is well-defined. In the context of a DLP-based primitive where the only operation required is scalar multiplication of points, this enables us to compute with equivalence classes of trace zero elements modulo the action of the Frobenius, and no extra bits are required to distinguish between the different representatives. 

We also give a compression algorithm to compute the representation, and a decompression algorithm to recover the original divisor class. We show that our algorithms are comparable with or more efficient than all previously known methods, when one compares the total time required for compression and decompression.

The paper is organized as follows: In Section \ref{sec:prelim} we give some preliminaries on (hyper)elliptic curves, the trace zero variety, and optimal representations. In Section \ref{sec:rep} we discuss the representation, together with compression and decompression algorithms, and we specialize these results to elliptic curves in Section \ref{sec:ec}. In Section \ref{sec:timings} we present some implementation results, as well as a detailed comparison with the other compression methods. Finally, in the Appendix we give explicit equations for the relevant cases $g=1, n=3,5$ and $g=2,n=3$.

\subsubsection*{Acknowledgements} We thank Tanja Lange for bringing to our attention the work of Blady and Naumann, and we are grateful to the mathematics department of the University of Zurich for access to their computing facilities. 

\section{Preliminaries} \label{sec:prelim}

We start by recalling the definitions and basic facts that we will need in this paper, and fixing some notation.

\subsection{Elliptic and hyperelliptic curves}

Let $C$ be a projective elliptic or hyperelliptic curve of genus $g$ defined over a finite field $\Fq$ that has an $\Fq$-rational Weierstra\ss\ point. For ease of exposition, we assume that $\Fq$ does not have characteristic 2. By making the necessary adjustments, the content of this paper carries over to the binary case.
If $\Fq$ has odd characteristic, then $C$ can be given by an affine equation of the form $$C : y^2 = f(x)$$ with $f \in \Fq[x]$ monic of degree $2g+1$ and with no multiple zeros. We denote by $\O$ the point at infinity and by $\Div$ the group of divisors on $C$.
Let $w$ be the involution $$w : C \rightarrow C, \quad (X,Y) \mapsto (X,-Y), \quad \O \mapsto \O.$$
The Frobenius map on $C$ is defined as $$\varphi : C \rightarrow C,\quad (X,Y) \mapsto (X^q,Y^q),\quad \O \mapsto \O.$$
Both $w$ and $\varphi$ extend to group homomorphisms on $\Div$.

Let $\Fqn$ be an extension field of $\Fq$, $n\geq 1$. A divisor $D$ is {\it $\Fqn$-rational} if $\varphi^n(D) = D$. We denote by
 $\Div(\Fqn)$ the $\Fqn$-rational divisors on $C$. $\Div(\Fqn)$ is a subgroup of $\Div$. 

Let $D_1=a_1P_1+\ldots+a_kP_k-a\O,D_2=b_1P_1+\ldots+b_kP_k-b\O\in\Div$, $a_i,b_i,a,b\in\N\cup\{0\}$, be two divisors of degree zero. If $a_i\leq b_i$ for all $i$ we write $D_1\leq D_2$.

As usual in the cryptographic setting, we work in the Picard group $\pic$ of $C$. This is the group of degree zero divisor classes, modulo principal divisors. For any $D,D_1,D_2\in\Div$, we write $[D]$ for the equivalence class of $D$ in $\pic$ and $D_1 \sim D_2$ for $[D_1] = [D_2]$. 
The $\Fqn$-rational divisor class $[D]$ is the equivalence class of the $\Fqn$-rational divisor $D$. The subgroup of $\pic$ consisting of the $\Fqn$-rational divisor classes is denoted by $\pic(\Fqn)$.

A divisor $D = P_1 + \ldots + P_r - r\O\in\Div^0$ is {\it semi-reduced} if $P_i \in C \setminus \{\O\}$ and $P_i \neq w(P_j)$ for $i \neq j$.
$D$ is {\em reduced} if it is semi-reduced and in addition $r \in \{0,\ldots,g\}$.
Notice that $D$ is reduced with $r=0$ if and only if $[D] = 0$. 

It follows from the Riemann--Roch Theorem that every degree zero divisor class can be represented by a unique reduced divisor.
For any divisors $D_1, D_2\in\Div^0$, we denote by $D_1 \oplus D_2$ the reduced divisor such that $[D_1 \oplus D_2]=[D_1 + D_2]$. 
When $C$ is an elliptic curve, then each non-zero element of $\pic$ is uniquely represented by a divisor of the form $P - \O$ with $P \in C$. In fact, we have $C \cong \pic$ as groups via $P \mapsto [P - \O]$. 
For elliptic curves, we denote a divisor class by the unique corresponding $P\in C$. In particular, we denote $0\in\pic$ by  the point $\O$. 

There is a one-to-one correspondence between semi-reduced divisors $D = P_1 + \ldots + P_r - r\O$ and pairs of polynomials $(u,v)$ such that $u$ is monic, $\deg v < \deg u$, and $u \mid v^2-f$: Given a divisor $D$, then $u(x) = \prod_{i=1}^{r} (x-X_i)$ and $v(x)$ is the unique polynomial such that $v(X_i) = Y_i$ with multiplicity equal to the multiplicity of $P_i$ in $D$. The polynomial $v(x)$ may be computed by solving a linear system. Conversely, given polynomials $u,v$ as above, let $D=\Delta-\deg(\Delta)\O$ where $\Delta$ is the effective divisor with defining ideal $I_\Delta=(u(x),y-v(x)).$  It is easy to show that $D$ is semi-reduced. Notice that since $u\mid v^2-f$, then $y^2-f\in(u,y-v)$.
The correspondence restricts to a correspondence between reduced divisors and pairs of polynomials $(u,v)$ such that $u$ is monic, $\deg v < \deg u\leq g$, and $u \mid v^2-f$.

A commonly used representation for divisor classes is the {\it Mumford representation}. An element $[D] \in \pic$ with $D$ 
a reduced divisor is represented by the pair of polynomials $[u(x),v(x)]$ associated to it in the correspondence described in the previous paragraph. The Mumford representation is particularly useful when computing with divisor classes, and all algorithms given in this paper make use of this representation. 
If $C$ is an elliptic curve, then the Mumford representation of $P = (X,Y) \in C$ is $[x-X,Y]$. It follows from the definition that the Mumford representation of $[0]$ is $[1,0]$.
A convenient property of the Mumford representation is that $\Fqn$-rationality of divisor classes is easily detected: $[u,v] \in \pic(\Fqn)$ if and only if $u, v \in \Fqn[x]$. 

By definition, a reduced divisor $D\in\Div(\Fqn)$ with $[D]=[u,v]$ is {\it prime} if $u\in\Fqn[x]$ is an irreducible polynomial. This is equivalent to the statement that $(u,y-v)$ is a prime ideal of $\Fqn[x,y]/(y^2-f(x))$. Notice that being prime depends on the choice of $\Fqn$. Sometimes we write a divisor as a sum of prime divisors: $D=D_1+\ldots+D_t$, with $D_i\in\Div(\Fqn)$ prime. The prime divisors $D_1,\ldots,D_t$ are unique up to permutation, but not necessarily distinct. If $[D_i]=[u_i,v_i]$ is the Mumford representation, then $u=\prod_{i=1}^t u_i$ is the irreducible factorization of $u\in \Fqn[x]$.

Cantor's Algorithm performs the addition of divisor classes in the Mumford representation. 
For elliptic curves and hyperelliptic curves of genus 2, there exist explicit addition formulas that are easier to use and more efficient than Cantor's Algorithm (see \cite{washington} and \cite{lange-03}).

\subsection{The trace zero variety and optimal representations}

The trace endomorphism in the divisor group of $C$ with respect to the extension $\Fqn|\Fq$ is defined by
$$ \Tr : \Div(\Fqn) \rightarrow \Div(\Fq), \quad D \mapsto D + \varphi(D) + \ldots + \varphi^{n-1}(D).$$

Throughout the paper, we denote by $u^{\varphi}$ the application of the finite field Frobenius automorphism $\varphi: \Fqbar \rightarrow \Fqbar$ to the coefficients of a polynomial $u$. We denote the product $u u^{\varphi}\cdots u^{\varphi^{n-1}}$ by $u^{1+\varphi+\ldots+\varphi^{n-1}}$ or by $N(u)$, and we call it the {\em norm} of $u$.

\begin{lemma}\label{inverseimages}
The trace homomorphism $\Tr :\Div(\Fqn)\rightarrow\Div(\Fq)$ has the following properties:
\begin{enumerate}
\item For any prime divisor $D$ we have $\Tr^{-1}(\Tr(D))=\{D,\varphi(D),\ldots,\varphi^{n-1}(D)\}.$
\item $D\in\Div(\Fqn)\setminus\Div(\Fq)$ is a prime divisor if and only if $\Tr(D)\in\Div(\Fq)$ is a prime divisor.
\end{enumerate}
\end{lemma}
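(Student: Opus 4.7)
The plan is to identify a prime divisor $D \in \Div(\Fqn)$ with the $\mathrm{Gal}(\Fqbar/\Fqn) = \langle \varphi^n \rangle$-orbit of a geometric point $P$ of $C$, and similarly primes in $\Div(\Fq)$ with $\langle \varphi \rangle$-orbits. The key structural fact, which is where the primality of $n$ enters, is an orbit-size dichotomy: if $P$ has $\Fq$-degree $m$ then $[\Fqn(P):\Fqn] = m/\gcd(n,m)$, and since $n$ is prime either $\gcd(n,m)=1$ and the $\Fq$- and $\Fqn$-orbits of $P$ coincide, or $n\mid m$ and the $\Fq$-orbit of $P$ is the disjoint union of exactly $n$ distinct $\Fqn$-orbits cyclically permuted by $\varphi$. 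A second consequence of $n$ being prime is that for any $D \in \Div(\Fqn) \setminus \Div(\Fq)$ the stabilizer of $D$ in $\langle \varphi \rangle / \langle \varphi^n \rangle \cong \mathbb{Z}/n\mathbb{Z}$ is trivial, so $D, \varphi(D), \ldots, \varphi^{n-1}(D)$ are pairwise distinct.

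For prime $D \in \Div(\Fqn) \setminus \Div(\Fq)$ with support the $\Fqn$-orbit of some $P$, the second case of the dichotomy applies; expanding $\Tr(D) = \sum_{i=0}^{n-1}\varphi^i(D)$ collects every point of the $\Fq$-orbit of $P$ with multiplicity one, so $\Tr(D)$ is the prime divisor in $\Div(\Fq)$ attached to that orbit, giving the forward direction of (2). All $\varphi^i(D)$ trivially share this trace, proving $\{D,\varphi(D),\ldots,\varphi^{n-1}(D)\} \subseteq \Tr^{-1}(\Tr(D))$ in (1). For the reverse inclusion, any prime $D' \in \Div(\Fqn)$ with $\Tr(D') = \Tr(D)$ cannot lie in $\Div(\Fq)$, since otherwise $\Tr(D') = nD'$ would have all coefficients at least $n \geq 2$ while $\Tr(D)$ has coefficients one; then by the forward direction $\Tr(D')$ is the prime on the $\Fq$-orbit of its support, which must equal the $\Fq$-orbit of $P$, so by the dichotomy the support of $D'$ is one of the $n$ $\Fqn$-orbits above that $\Fq$-orbit, forcing $D' = \varphi^i(D)$ for some $i$. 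The case $D \in \Div(\Fq)$ of (1) is immediate from the same coefficient argument: it forces $D' \in \Div(\Fq)$ and thus $nD' = nD$, i.e., $D' = D$.

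For the converse of (2), decompose $D = D_1 + \ldots + D_t$ as a sum of prime divisors in $\Div(\Fqn)$ and apply $\Tr$ termwise: each $\Tr(D_i)$ is either a prime in $\Div(\Fq)$ (when $D_i \notin \Div(\Fq)$, by the forward direction) or equals $nD_i$ otherwise. Since $\Tr(D)$ is assumed prime with all coefficients one, comparing coefficients rules out the second alternative for every $i$, and the primes $\Tr(D_i)$ must all coincide with the same prime in $\Div(\Fq)$; a final coefficient count then forces $t = 1$, so $D = D_1$ is prime. The main obstacle throughout is the initial orbit-size dichotomy, which is exactly where the primality of $n$ is used; once it is in place, the rest reduces to clean bookkeeping with Galois orbits and divisor coefficients.
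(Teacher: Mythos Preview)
Your proof is correct and takes a somewhat different route from the paper's. The paper works through the Mumford representation: for a prime divisor $D$ with $[D]=[u,v]$, the $u$-polynomial of $\Tr(D)$ is the norm $N(u)=u\,u^{\varphi}\cdots u^{\varphi^{n-1}}$, and part (ii) is then simply quoted as the classical fact that $u\in\Fqn[x]\setminus\Fq[x]$ is irreducible if and only if $N(u)\in\Fq[x]$ is; part (i) follows because the irreducible $\Fqn$-factors of $N(u)$ are exactly the $u^{\varphi^j}$, so any prime $D'$ with $\Tr(D')=\Tr(D)$ must have $u$-polynomial equal to one of them. Your argument bypasses the Mumford representation entirely and works directly with Galois orbits of closed points, replacing the norm-irreducibility statement by the orbit-size dichotomy coming from $[\Fqn(P):\Fqn]=m/\gcd(n,m)$ and the triviality of the stabilizer of $D$ in $\mathbb{Z}/n\mathbb{Z}$. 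The two approaches are equivalent in content---the polynomial fact the paper invokes \emph{is} your orbit dichotomy, translated via the $u$-polynomial---but yours is more self-contained and makes the role of the primality of $n$ fully explicit, while the paper's is terser and stays within the Mumford-representation language used throughout. Both proofs, incidentally, read the preimage in (i) as a preimage among prime divisors; this is the intended interpretation and is how the lemma is applied later in the paper.
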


\begin{proof}
$(i)$ Let $D \in \Div(\Fqn)$ be a prime divisor with $[D]=[u,v]$, $u \in \Fqn[x]$ irreducible. Then $\Tr (D)$ has $u$-polynomial $N(u) = u u^{\varphi} \cdots u^{\varphi^{n-1}}$, where all the $u^{\varphi^j}$ are irreducible over $\Fqn$. Hence any $D'$ with $\Tr (D') = \Tr (D)$ has to have as $u$-polynomial one of the $u^{\varphi^j}$, and therefore $D' = \varphi^j (D)$ for some $j \in \{0,\ldots,n-1\}$. Conversely, $\Tr(\varphi^j(D))=\Tr(D)$ for all $j$.

$(ii)$ This is a restatement of the well known fact that that $u \in \Fqn[x]\setminus\Fq[x]$ is irreducible if and only if $N(u) = uu^{\varphi}\cdots u^{\varphi^{n-1}} \in \Fq[x]$ is irreducible.
\end{proof}

Since the Frobenius map is well-defined as an endomorphism on divisor classes, we also have a trace endomorphism $[\Tr]$ in the Picard group
$$[\Tr] : \pic(\Fqn) \rightarrow \pic(\Fq), \quad [D] \mapsto [D+\varphi(D)+\ldots+\varphi^{n-1}(D)]. $$
We are interested in the kernel of this map.
\begin{definition}\label{def:tzv}
 Let $n$ be a prime number. Then the {\it trace zero subgroup} of $\pic(\Fqn)$ is
 $$T_n = \{ [D] \in \pic(\Fqn) \mid \Tr(D) \sim 0 \}. $$
\end{definition}

Using Weil restriction, the points of $T_n$ can be viewed as the $\Fq$-rational points of a $g(n-1)$-dimensional variety defined over $\Fq$, called the {\it trace zero variety}. For a proof and more details, see \cite[Chapters 7.4.2 and 15.3]{handbook-hecc}.

Interest in the trace zero variety in the cryptographic context was first raised by Frey in \cite{frey-99}. The main advantages of working in $T_n$ are that
addition in the trace zero subgroup may be sped up considerably by using the Frobenius endomorphism, and that it yields high security parameters in the context of pairing-based cryptography, for some values of $n$ and $g$. Moreover, the DLP in $\pic(\Fqn)$ is as hard as the DLP in $T_n$.
Therefore, working in $T_n$ allows us to reduce the key length with respect to $\pic(\Fqn)$ without compromising the hardness of the DLP. In order to reduce the key length however, one needs to find an efficient representation for its elements. In this paper, we give an optimal one for any $g$ and any prime $n$. 

We start by showing that solving the DLP in $\pic(\Fqn)$ can be reduced to solving the DLP in $T_n$.

\begin{proposition} \label{prop:exsequence}
We have a short exact sequence
$$ 0 \longrightarrow \pic(\Fq) \longrightarrow \pic(\Fqn) \overset{[\varphi-\id]}{\longrightarrow} T_n \longrightarrow 0. $$
In particular, solving a DLP in $\pic(\Fqn)$ has the same complexity as solving a DLP in $T_n$ and a DLP in $\pic(\Fq)$.
\end{proposition}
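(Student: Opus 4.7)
The plan is to verify that the two arrows land in the claimed subgroups, establish exactness at each position, and then deduce the DLP reduction.

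First I would check that $[\varphi - \id]$ maps $\pic(\Fqn)$ into $T_n$: for $[D] \in \pic(\Fqn)$, the trace telescopes as
$$[\Tr]\bigl([\varphi(D) - D]\bigr) = \sum_{i=0}^{n-1}[\varphi^{i+1}(D) - \varphi^i(D)] = [\varphi^n(D) - D] = 0,$$
using the $\Fqn$-rationality of $D$. Injectivity of $\pic(\Fq) \hookrightarrow \pic(\Fqn)$ is automatic from the inclusion of subgroups of $\pic$.

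For exactness at $\pic(\Fqn)$, the containment $\pic(\Fq) \subseteq \ker[\varphi - \id]$ is immediate since $\varphi$ fixes $\pic(\Fq)$. For the reverse, any class $[D] \in \pic(\Fqn)$ with $\varphi[D] = [D]$ is Galois-invariant over $\Fq$; because $C$ carries an $\Fq$-rational Weierstra\ss\ point, Galois descent (or Hilbert 90 applied to the function field extension) ensures that such a class admits an $\Fq$-rational representative, so $[D] \in \pic(\Fq)$.

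Surjectivity of $[\varphi - \id]$ onto $T_n$ is the heart of the proof. Identifying $\pic(\Fqbar)$ with $J(\Fqbar)$ for $J$ the Jacobian of $C$, Lang's theorem ensures that the endomorphism $\varphi - \id$ of the connected commutative algebraic group $J$ is surjective on $\Fqbar$-points. Given $[D] \in T_n$, pick any preimage $[E] \in \pic(\Fqbar)$ with $(\varphi - \id)[E] = [D]$; then
$$(\varphi^n - \id)[E] = (1 + \varphi + \cdots + \varphi^{n-1})(\varphi - \id)[E] = [\Tr(D)] = 0$$
because $[D] \in T_n$, so $[E]$ is $\varphi^n$-fixed and therefore lies in $\pic(\Fqn)$.

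Finally, the DLP complexity claim follows by projection: given $Q = kP$ in $\pic(\Fqn)$, the images $[\Tr](Q) = k[\Tr](P) \in \pic(\Fq)$ and $[\varphi - \id]Q = k[\varphi - \id]P \in T_n$ are DLP instances with the same secret, from which $k$ is reconstructed via CRT (or Pohlig--Hellman when the subgroup orders share factors); conversely any DLP in $\pic(\Fq)$ or $T_n$ is already a DLP in $\pic(\Fqn)$. The main obstacle is the surjectivity step, where Lang's theorem must be combined with the trace zero hypothesis to ensure that the $\Fqbar$-preimage is in fact $\Fqn$-rational.
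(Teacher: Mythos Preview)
Your argument is correct. The paper itself is terse: it cites \cite[Proposition 7.13]{handbook-hecc} for surjectivity and says nothing further about exactness, whereas you spell out the telescoping for the image, invoke Galois descent for exactness in the middle (correctly using the $\Fq$-rational Weierstra\ss\ point), and give a self-contained surjectivity proof via Lang's theorem together with the observation that the trace zero condition forces the $\Fqbar$-preimage to be $\varphi^n$-fixed. This is more informative than the paper's proof and uses the same underlying mechanism that the cited reference would.

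For the DLP reduction the two arguments differ slightly. The paper first reduces via Pohlig--Hellman/CRT to a base point $[D]$ of prime order and then observes that either $[\varphi(D)-D]\neq 0$ (so it has the same prime order and the instance maps faithfully into $T_n$) or $[D]\in\pic(\Fq)$. You instead project along both $[\Tr]$ and $[\varphi-\id]$ and recombine. Your version is fine, but the paper's formulation is marginally cleaner because it avoids having to worry about common factors of the orders of the two projections: once the base point has prime order, exactly one of the two projections carries the full instance.
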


\begin{proof}
Surjectivity of $[\varphi-\id]$ holds according to \cite[Proposition 7.13]{handbook-hecc}. This proves that we have a short exact sequence as claimed.
By the standard reduction obtained by combining an effective version of the Chinese Remainder Theorem and the Pohlig--Hellman Algorithm, we may assume without loss of generality that we are solving a DLP of the form $a[D]=[D']$, where $[D],[D']\in\pic(\Fqn)$ and $[D]$ has prime order. If $[\varphi(D)-D]\neq 0$, then $[\varphi(D)-D]$ and $[D]$ have the same order, and the DLP may be mapped to $T_n$ via $[\varphi-\id]$ and solved there. Else, $[D]\in\pic(\Fq)$.
\end{proof}

\begin{remark}
We stress that the choice of good parameters is crucial for the security of trace zero cryptosystems. While Lange \cite{lange-04}, Avanzi--Cesena \cite{avanzi-cesena-07}, and Rubin--Silverberg \cite{rubin-silverberg-09} have shown that for certain choices of $n$ and $g$ trace zero subgroups are useful and secure in the context of pairing-based cryptography, there may be security issues in connection with DLP-based cryptosystems. For example, Weil descent attacks (see \cite{ghs, diem-ghs,diem-scholten}) and index calculus attacks (see \cite{gaudry-09,enge-gaudry-thome-11,diem-11}) may apply. However, Weil descent attacks only apply to a very small proportion of all curves, and index calculus attacks often have large constants hidden in the asymptotic complexity analysis, thus making them very hard to realize in practice. Nevertheless, special care must be taken to choose good parameters and avoid weak curves. E.g., for $g=1$ and $n=3$ and for most curves, computing a DLP in the trace zero subgroup has square root complexity. For a more complete discussion of the complexity of DLP algorithms for the trace zero subgroup, see also ~\cite{gorla-massierer-3}.
\end{remark}

\begin{remark} \label{rmk:card}
As a consequence of the exact sequence in Proposition \ref{prop:exsequence} we obtain that the cardinality of the trace zero subgroup may be computed easily in terms of the coefficients of the characteristic polynomial of Frobenius, see also \cite[Chapter 15.3.1]{handbook-hecc}. 
In particular, counting the number of points in $T_n$ only requires determining the characteristic polynomial of a curve defined over $\Fq$. Counting the number of points of an elliptic or hyperelliptic curve of, e.g., the same genus and comparable group size would require determining the characteristic polynomial of a curve defined over $\F_{q^{n-1}}$.
\end{remark}

The question of finding an optimal-size representation for the elements of the trace zero subgroup has been investigated in previous works both for elliptic and hyperelliptic curves, and it is stated as an open problem in the conclusions of \cite{avanzi-cesena-07}. The analogous problem for primitive subgroups of finite fields leads to torus-based cryptography, which was introduced by Rubin and Silverberg in~\cite{rubin-silverberg-03}. 

\begin{definition}\label{repr}
Let $A$ be a $d$ dimensional abelian variety defined over $\Fq$. 
A {\it representation} for the elements of $A(\Fq)$ is a map $$\mathcal{R} : A(\Fq) \longrightarrow \mathbb{F}_q^{\ell}\times\mathbb{F}_2^{k}.$$
\end{definition}

Notice that, in our setup, a representation map $\mathcal{R}$ is not necessarily injective. Nevertheless, any representation induces an injective representation $$\overline{\mathcal{R}}: A(\Fq)/{\sim} \longrightarrow \mathbb{F}_q^\ell\times\mathbb{F}_2^{k},$$ where $P\sim Q$ iff $\mathcal{R}(P)=\mathcal{R}(Q)$ for any $P,Q\in A(\Fq)$. 
Sometimes we do not distinguish between $\mathcal{R}$ and $\overline{\mathcal{R}}$, and say that $x\in\im\mathcal{R}$ is a representation for the {\it class} $\mathcal{R}^{-1}(x)$. 

\begin{definition}\label{opt}
Let $d\geq 1$ be an integer. Let $\A$ be a set of pairs $(A,\Fq)$, where $A$ is a $d$ dimensional abelian variety defined over $\Fq$ with at least one $\Fq$-rational point. 
An {\it optimal representation} for $\A$ is a family of representations
$$\mathcal{R}: A(\mathbb{F}_{q}) \longrightarrow \mathbb{F}_{q}^d\times\mathbb{F}_2^k$$ for all $(A,\Fq)\in\A$, with the property that $k$ and the cardinality of $\mathcal{R}^{-1}(x)$ are upper bounded by constants which do not depend on $(A,\Fq)\in\A$.
We also say that each map $$\mathcal{R}: A(\mathbb{F}_{q}) \longrightarrow \mathbb{F}_{q}^d\times\mathbb{F}_2^k$$ 
is an {\it optimal representation} for the elements of $A(\Fq)$.

Given $P\in A(\Fq)$, $x\in\im\mathcal{R}$, we refer to computing $\mathcal{R}(P)$ as {\it compression} and $\mathcal{R}^{-1}(x)$ as {\it decompression}.
\end{definition}

It was shown in~\cite{lang-weil-54} that for any abelian variety $A$ defined over $\Fq$ one has $$|A(\Fq)|=q^d+ O(q^{d-\frac{1}{2}}).$$ Hence, intuitively, a representation $\mathcal{R}$ for $\A$ is optimal if it allows us to represent the elements of $A(\Fq)$ for every $(A,\Fq)\in \A$ with the smallest possible number of elements of $\Fq$, for $q\gg 0$. The number $k$ of extra bits is independent of $q$, hence it becomes negligible for $q\gg 0$.

\begin{remark}
Sometimes we deal with representations which are not defined on the zero element of the group. However, this is not a problem in practice, and it is in fact common in cryptographic use (as one sees in the following examples).
\end{remark}

\begin{example}\label{repr_ec}
Let $\A=\{(E,\Fq)\mid \ q \mbox{ prime power}, \ E \mbox{ elliptic curve defined over $\Fq$}\}$. Assume that the elliptic curves are in short Weierstrass form.
One has the usual representation 
$$\begin{array}{rcl} \mathcal{R} : E(\Fq)\setminus\{\O\} & \longrightarrow & \Fq\\
(X,Y) & \longmapsto & X.\end{array}$$
For any $X\in\mathcal{R}(E(\Fq))$ we have $\mathcal{R}^{-1}(X)=\{(X,Y),(X,-Y)\}$. Compression has no computational cost, and decompression is efficient, since $Y$ can be recomputed, up to sign, from the equation of the curve at the cost of computing a square root in $\Fq$. 

Appending to the image of each point an extra bit corresponding to the sign of the $y$-coordinate yields an injective map 
$$\mathcal{R}' : E(\Fq)\setminus\{\O\} \longrightarrow \Fq \times \F_2.$$ Both $\mathcal{R}$ and $\mathcal{R}'$ are optimal representations for $\A$.
\end{example}

The same logic applies to hyperelliptic curves. 

\begin{example}\label{repr_hc}
Let $g\geq 2$ be an integer. Let $$\A=\{(\pic,\Fq)\mid \ q \mbox{ prime power}, \ C \mbox{ plane hyperelliptic curve of genus $g$ defined over $\Fq$}\}.$$ 
Assume that the hyperelliptic curves have equations of the form $y^2=f(x)$, with $\deg f=2g+1$.
The following is an optimal representation proposed by Hess--Seroussi--Smart in~ \cite{hess-seroussi-smart-01}:
$$\begin{array}{ccc} \mathcal{R}:\pic(\Fq) & \longrightarrow &\F_q^g\times\F_2 \\
\mbox{$[D]=[u=\sum_{i=0}^{g} u_ix^i,v]$} & \longmapsto & (u_0,\ldots,u_{g-1},\delta)\end{array}$$
where $u_i=0$ for $i>r=\deg u$, $\delta=1$ if $r=g$, and $0$ otherwise.
The polynomial $u$ contains all the information about the $x$-coordinates of the points $P_i$ in the support of the reduced divisor $D = P_1 + \ldots + P_r - r\O$, but not about the signs of the corresponding $y$-coordinates. Therefore $\mathcal{R}$ identifies up to $2^g$ elements of $\pic(\Fq)$. 
As before, one can use $g$ extra bits to store these signs, making the representation injective (see \cite{hess-seroussi-smart-01}). 
A different optimal representation for the elements of $\pic(\Fq)$ is given by Stahlke \cite{stahlke-04}.
\end{example}

\begin{example}\label{repr_torus}
Let $n\geq 2$ be an integer. For each prime power $q$, let $P_{q,n}$ be the primitive subgroup of the multiplicative group $\mathbb{G}_m$, relative to the field extension $\mathbb{F}_q^n|\Fq$. $P_{q,n}$ is a $\phi(n)$ dimensional abelian subvariety of the Weil restriction of scalars $\Res_{\mathbb{F}_q^n|\Fq}\mathbb{G}_m$, where $\phi(n)=|\{1\leq m\leq n\mid (m,n)=1\}|$ is the Euler $\phi$ function. Let $\A_n=\{(P_{q,n},\Fq)\mid \ q \mbox{ a prime power}\}$. Finding an optimal representation for $\A_n$ is at the core of torus-based cryptography. This problem was solved for $n=2,3,6,30$ in several works, including \cite{luc,gong-harn-99,xtr,rubin-silverberg-03,rubin-silverberg-04,vandijk-woodruff-04,dgprssw-05,rubin-silverberg-08,shirase-han-hibin-kim-takagi-08,karabina-10,karabina-12,yonemura-isogai-muratani-hanatani-12}. 
\end{example}

\begin{notation}
Let $g\geq 2$ be an integer, $n$ be a prime number. Let $$\T_{n,1}=\{(T,\Fq)\mid \ q \mbox{ prime power},\ T \mbox{ trace zero variety of an elliptic curve}\}$$ and
$$\T_{n,g}=\{(T,\Fq)\mid \ q \mbox{ prime power},\ T \mbox{ trace zero variety of a hyperelliptic curve of genus $g$}\}$$ 
where all trace zero varieties are relative to a field extension of fixed degree $n$.
\end{notation}

In this paper, we construct representations for $\T_{n,g}$, $g\geq 1$, of the form $$\mathcal{R} : T_n\longrightarrow \mathbb{F}_q^{g(n-1)}\times\mathbb{F}_2$$ with the property that each element in the image has at most $n^g$ inverse images. 

\begin{remark}
Since $T_n\subset\pic(\Fqn)$, we may use the representations of Examples~\ref{repr_ec} and~\ref{repr_hc} for the family $\T_{n,g}$. However such representation are not optimal, since the dimension of the varieties in $\T_{n,g}$ is $(n-1)g$.
\end{remark}

\section{An optimal representation for the trace zero subgroup via rational functions} \label{sec:rep}

In this section, we give an optimal representation for the family $\T_{n,g}$ of trace zero varieties of elliptic curves 
or hyperelliptic curves of fixed genus $g$, with respect to a field extension of fixed degree $n$.
A simple example is the case of elliptic curves $E$ and extension degree $n=2$, where
$$T_2 = \{ (X,Y) \in E(\F_{q^2}) \mid X \in \Fq, Y \in (\F_{q^2} \setminus \Fq) \cup \{0\} \} \cup \{\O\}.$$ 
Hence the $x$-coordinate of the points of $T_2$ yields an optimal representation (see \cite[Proposition 2]{gorla-massierer-1}). 
This statement can be easily generalized to higher genus curves when $n=2$. We omit the proof, since the proposition is a special case of the next theorem.

\begin{proposition}\label{n=2}
Fix $g\geq 1$ and let $C$ be an elliptic or hyperelliptic curve of genus $g$ defined over $\Fq$. 
Let $T_2\subseteq\pic(\mathbb{F}_{q^2})$ be the trace zero subgroup corresponding to the field extension $\F_{q^2}|\Fq$. 
Let $$\begin{array}{rccl}
\mathcal{R} : & T_2 & \longrightarrow & \mathbb{F}_q^g\times\mathbb{F}_2 \\
 & \mbox{$[u,v]$} & \longmapsto & (u_0,\ldots,u_{g-1},\delta)\end{array}$$
 where $u=\sum_{i=0}^g u_ix^i$ is monic of degree $0\leq r\leq g$, $\delta=1$ if $\deg u=g$, and $\delta=0$ otherwise. 
Then $$ T_2=\{[u,v]\in\pic(\mathbb{F}_{q^2})\mid u\in\Fq[x],\; v^{\varphi}=-v\}, $$
and $\mathcal{R}$ yields an optimal representation for the family $\T_{2,g}$.
\end{proposition}

We now proceed to solve the problem in the case when $n$ is any prime.
Let $D$ be a reduced divisor. We propose to represent an element $[D]$ of $T_n$ via the rational function $h_D$ on $C$ with divisor $$ \princdiv(h_D) = \Tr(D).$$ Such a function is defined over $\Fq$ since $\Tr(D)$ is, and it is unique up to multiplication by a constant. We now establish some properties of $h_D$. In particular, we show that a normalized form of $h_D$ can be represented via $g(n-1)$ elements of $\Fq$ plus an extra bit. This gives an optimal representation for the family $\T_{n,g}$, where each map identifies at most $n^g$ divisor classes. 

\begin{theorem} \label{thm:hd}
Let $D = P_1 + \ldots + P_r - r\O$ be a reduced divisor such that $[D]=[u,v] \in T_n$, and let $h_D \in \Fq(C)$ be a function such that $\princdiv(h_D) = \Tr(D)$. Write $D=D_1+\ldots+D_t$, where $D_i$ are reduced prime divisors defined over $\Fqn$. Then:
\begin{enumerate}
\item $h_D = h_{D,1}(x) + y h_{D,2}(x)$ with $h_{D,1}, h_{D,2} \in \Fq[x]$.
\item $H_D(x) := h_{D,1}(x)^2 - f(x)h_{D,2}(x)^2 \in \Fq[x]$ has degree $rn$, and its zeros over $\Fqbar$ are exactly the $x$-coordinates of the points $\varphi^{j}(P_1), \ldots, \varphi^{j}(P_r)$ for $j = 0,\ldots,n-1$. Equivalently, $H_D=N(u)$ where $N(u)$ denotes the norm of $u$ relative to $\Fqn|\Fq$.
\item $\deg h_{D,1} \leq \lfloor \frac{nr}{2} \rfloor$ and $\deg h_{D,2} \leq \lfloor \frac{nr-2g-1}{2} \rfloor$, where equality holds for the degree of $h_{D,1}$ if $r$ is even or $n=2$, and equality holds for the degree of $h_{D,2}$ if $r$ is odd and $n \ne 2$.
\item Let $F$ be a reduced divisor. Then $h_D=h_F\in\Fq(C)$ if and only if $F$ is of the form $F=\varphi^{j_1}(D_1)+\ldots+\varphi^{j_t}(D_t)$ for some $0\leq j_1,\ldots,j_t\leq n-1$. In particular, there are at most $n^g$ reduced divisors $F$ such that $h_F=h_D$.
\end{enumerate}
\end{theorem}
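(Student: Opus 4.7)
The plan is to proceed sequentially through the four parts, leveraging the double-cover structure $\Fq(C) = \Fq(x)[y]/(y^2-f)$ and the prime decomposition of divisors. For part (i), I would write $h_D = a(x) + y\,b(x)$ with $a, b \in \Fqbar(x)$, using $\{1, y\}$ as a basis of $\Fq(C)$ over $\Fq(x)$. Since $\Tr(D)$ is $\Fq$-rational, a short Galois-descent argument (Hilbert~90 for $\Fqbar^\times$) lets me rescale $h_D$ so that it is defined over $\Fq$, forcing $a, b \in \Fq(x)$. To promote $a,b$ to polynomials I would use that $\princdiv(h_D) = \Tr(D)$ has no negative part at any affine point. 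Any affine pole of $h_D$ would come from a pole of $a$ or $b$, and I would rule out cancellation between $a$ and $y\,b$ by a local parity argument: at a Weierstra\ss\ point $y$ has order $1$ while $x-X_0$ has order $2$, so the pole orders of $a$ and $y\,b$ have opposite parities, while at a non-Weierstra\ss\ point the two sheets $(X_0, \pm Y_0)$ decouple, so a pole of $a$ or $b$ cannot be hidden in either evaluation.

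For part (ii), I take the $\Fq(x)$-norm $H_D = h_D \, w(h_D) = h_{D,1}^2 - f\,h_{D,2}^2 \in \Fq[x]$. The affine zeros of $h_D$ are exactly $\{\varphi^j(P_i)\}_{i,j}$ by $\princdiv(h_D) = \Tr(D)$, and those of $w(h_D)$ are their hyperelliptic involutes; hence the zero multiset of $H_D(x)$ equals $\{x(\varphi^j(P_i))\}$. Since $N(u) = \prod_j u^{\varphi^j}$ is the monic polynomial with exactly those roots, after fixing the scaling constant of $h_D$ one has $H_D = N(u)$ and in particular $\deg H_D = rn$.

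Part (iii) follows from pole analysis at $\O$: since $x$ has a pole of order $2$ and $y$ of order $2g+1$ at $\O$, the summands $h_{D,1}(x)$ and $y\,h_{D,2}(x)$ have pole orders of opposite parities, so no cancellation occurs and the pole of $h_D$ at $\O$ equals $\max(2\deg h_{D,1},\,2g+1+2\deg h_{D,2}) = rn$. Splitting on the parity of $rn$ (even iff $r$ is even or $n=2$) yields the stated bounds together with the claimed equalities.

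For part (iv), observe that $h_D = h_F$ in $\Fq(C)$ (up to scalar) is equivalent to $\Tr(D) = \Tr(F)$ as divisors. Writing $D = D_1 + \ldots + D_t$ as a sum of primes over $\Fqn$, Lemma~\ref{inverseimages}(ii) identifies the primes of $\Tr(D)$ in $\Div(\Fq)$: each non-$\Fq$-rational $D_i$ contributes a prime $\Tr(D_i)$, while each rational $D_i$ contributes $n$ copies of the $\Div(\Fq)$-prime $D_i$. Matching prime factorizations in $\Div(\Fq)$ and applying Lemma~\ref{inverseimages}(i), any reduced $F$ with $\Tr(F) = \Tr(D)$ must have its primes of the form $\varphi^{j_i}(D_i)$; the converse is immediate since $\Tr \circ \varphi = \Tr$. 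Counting gives at most $n^t \le n^g$ such $F$, as $t \le \deg u_D \le g$. The main obstacle I foresee is the affine-pole step in (i), where the Weierstra\ss\ and non-Weierstra\ss\ cases must be handled uniformly; the remaining steps are essentially careful bookkeeping on divisors and polynomial factorizations.
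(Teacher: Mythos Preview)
Your proposal is correct and follows essentially the same route as the paper: the norm computation $H_D = h_D\cdot(h_D\circ w)$ for (ii), the parity argument for (iii), and the prime decomposition plus Lemma~\ref{inverseimages} for (iv) all match. The one presentational difference is in (i): the paper simply notes that a function whose only pole is $\O$ lies in the affine coordinate ring $\Fq[x,y]/(y^2-f)$ and hence has the form $h_{D,1}(x)+y\,h_{D,2}(x)$, which renders your Hilbert~90 step and local Weierstra\ss/non-Weierstra\ss\ pole analysis unnecessary (though your argument is valid and in effect reproves that standard fact).
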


\begin{proof}
Since $[D] \in T_n$, we have $0\sim\Tr(D)\in\Div(\Fq)$. Hence there exists an $h_D \in \Fq(C)$ such that $\princdiv(h_D)=\Tr(D)$. The function $h_D$ is uniquely determined up to multiplication by a constant.
  
  $(i)$ The function $h_D$ is a polynomial, since it has its only pole at $\O$. Modulo the curve equation $y^2 = f(x)$, the polynomial $h_D\in\Fq[x,y]$ has the desired shape.
  
  $(ii)$ By definition, $h_D$ has zeros $\varphi^j(P_1),\ldots,\varphi^j(P_r), j = 0,\ldots,n-1$, and pole $nr \O$. Therefore, $h_D \circ w = h_{D,1}(x) - yh_{D,2}(x)$ has zeros $w(\varphi^j(P_1)),\ldots,w(\varphi^j(P_r)), j = 0,\ldots,n-1$ and pole $nr \O$. Since $H_D(x) = h_D (h_D \circ w)\in\Fq[x,y]/(y^2-f(x))$, then $H_D$ has precisely the zeros $\varphi^{j}(P_1), \ldots, \varphi^{j}(P_r),$ $w(\varphi^j(P_1)),\ldots,w(\varphi^{j}(P_r))$ for $j = 0,\ldots,n-1$ and the pole $2nr\O$. Therefore $H_D = N(u)$, up to multiplication by a constant.
  
  $(iii)$ From the fact that $\deg H_D=nr$ and $\deg f=2g+1$, we deduce the bounds on the degrees. If $r$ or $n$ is even, then $\lfloor \frac{nr}{2} \rfloor = \frac{nr}{2}$ and $\lfloor \frac{nr-2g-1}{2} \rfloor = \frac{nr}{2} - g - 1$. Therefore $\deg (h_{D,1}^2) \leq nr$ and $\deg (f h_{D,2}^2) \leq nr-1$, hence $\deg h_{D,1} = \frac{nr}{2}$. An analogous computation for $r$ and $n$ both odd shows that in this case $\deg h_{D,2} =\frac{nr-1}{2}-g=\left\lfloor\frac{nr-2g-1}{2}\right\rfloor$.
  
  $(iv)$ Let $F\in\Div(\Fqn)$ be a reduced divisor such that $h_F=h_D\in\Fq(C)$. Then 
$$\Tr(F)=\princdiv (h_F)=\princdiv (h_{D})=\Tr(D)\in\Div(\Fq).$$ 
Write $\Tr(D)=\Tr(D_1)+\ldots+\Tr(D_t)=\Tr(F)$, where $\Tr(D_i)\in\Div(\Fq)$ are prime divisors by Lemma~\ref{inverseimages} (ii). By Lemma~\ref{inverseimages} (i), $\Tr^{-1}(\Tr(D_i))=\{D_i,\varphi(D_i),\ldots,\varphi^{n-1}(D_i)\}$ for all $i$, hence $F=\varphi^{j_1}(D_1)+\ldots+\varphi^{j_t}(D_t)$ for some $j_1,\ldots,j_t\in\{0,\ldots,n-1\}$. The number of such $F$ is at most $n^t\leq n^g$.
\end{proof}

\begin{remark}
If $n=2$ and $[D]=[u(x),v(x)]\in T_2$, then $h_D(x,y)=u(x)$. Hence Theorem~\ref{thm:hd} recovers the optimal representation from Proposition~\ref{n=2}.
\end{remark}

\begin{remark}\label{rem:reduced}
Let $D\in\Div^0(\Fqn)$ be a reduced divisor, $D=D_1+\ldots+D_t$ with $D_i\in\Div^0(\Fqn)$ reduced prime divisors.
Notice that not all the divisors $F$ of the form $F=\varphi^{j_1}(D_1)+\ldots+\varphi^{j_t}(D_t)$ for some $j_1,\ldots,j_t\in\{0,\ldots,n-1\}$ are reduced. E.g., let $C$ be a hyperelliptic curve of genus $2$ and let $P\in C(\Fqn)\setminus C(\Fq)$ be a point. Then $\varphi(P)\neq P$ and $D=P+w(\varphi(P))-2\O$ is a reduced divisor. But a divisor $F=\varphi^{j_1}(P)+w(\varphi^{j_2}(P))-2\O$ is reduced if and only if $j_1\neq j_2$.
Because of this, when decompressing $\mathcal{R}([D])$ one needs to discard all the divisors classes $[F]\in T_n$ which have $\Tr(F)=\Tr(D)$, but $F$ is not a reduced divisor.
In our decompression algorithm, for a given $\alpha=\mathcal{R}([D])$ we recover one reduced $F\in\Div(\Fqn)$ such that $\mathcal{R}([F])=\alpha$. Such an $F$ uniquely identifies $\mathcal{R}^{-1}(\mathcal{R}([D]))$.
\end{remark}

The following corollary clarifies how Theorem~\ref{thm:hd} gives an optimal representation for $\T_{n,g}$, consisting of $(n-1)g$ elements of $\Fq$ and a bit. 
Using standard techniques, the representation may be made injective at the cost of appending $\lfloor g\log_2 n\rfloor + 1$ bits to it.

\begin{corollary}\label{cor:repr}
Let $n\geq 3$, let $0\neq D\in\Div(\Fqn)$ be a reduced divisor of degree zero such that $[D]=[u,v] \in T_n$, and let $r=\deg u$. 
Set $d_1=\left\lfloor\frac{ng}{2}\right\rfloor$ and $d_2=\left\lfloor\frac{(n-2)g-1}{2}\right\rfloor$.
Let $h_D=h_{D,1}(x)+y h_{D,2}(x) \in \Fq[x,y]$ be such that $\princdiv(h_D) = \Tr(D)$, where $h_{D,1} = \gamma_{d_1} x^{d_1} +\ldots+\gamma_1x+\gamma_0$, $h_{D,2} = \beta_{d_2}x^{d_2} + \ldots + \beta_1x+\beta_0$. Let $h_{D,1}$ be monic if $r$ is even, and $h_{D,2}$ be monic if $r$ is odd.
If $r=g$ let $\delta=1$, else let $\delta=0$. Define:
\begin{itemize}
\item If $g$ is even, then $$\begin{array}{rcl}
\mathcal{R} : T_n & \longrightarrow & \F_q^{(n-1)g}\times\F_2 \\
\mbox{$[D]$} & \longmapsto & (\beta_0,\ldots,\beta_{d_2},\gamma_0,\ldots,\gamma_{d_1-1},\delta)\\
\mbox{$[0]$} & \longmapsto & (0,\ldots,0).
\end{array}$$
\item If $g$ is odd, then $$\begin{array}{rcl}
\mathcal{R} : T_n & \longrightarrow & \F_q^{(n-1)g}\times\F_2 \\
\mbox{$[D]$} & \longmapsto & (\gamma_0,\ldots,\gamma_{d_1},\beta_0,\ldots,\beta_{d_2-1},\delta)\\
\mbox{$[0]$} & \longmapsto & (0,\ldots,0).
\end{array}$$
\end{itemize}
Then $\mathcal{R}$ yields an optimal representation for the family $\T_{n,g}$, with the property that every element of $\im\mathcal{R}$ has at most $n^g$ inverse images.
\end{corollary}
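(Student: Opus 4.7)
The plan is to verify four claims: $\mathcal{R}$ is well-defined, its codomain has the asserted form, it is optimal in the sense of Definition~\ref{opt}, and every fiber has cardinality at most $n^g$. \emph{Well-definedness.} The function $h_D$ is determined only up to a nonzero scalar. Since $n\geq 3$, Theorem~\ref{thm:hd}(iii) asserts that, depending on the parity of $r$, one of $h_{D,1}, h_{D,2}$ has an explicit exact degree with nonzero leading coefficient. Normalizing that distinguished polynomial to be monic kills the scalar ambiguity and uniquely fixes all coefficients $\gamma_i,\beta_j$. The degree bounds of Theorem~\ref{thm:hd}(iii) together with $r\leq g$ give $\deg h_{D,1}\leq d_1$ and $\deg h_{D,2}\leq d_2$, so the listed coefficients determine both polynomials up to a single omitted leading entry.

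\emph{Recoverability via $\delta$.} Next I verify that the omitted entry ($\gamma_{d_1}$ when $g$ is even, $\beta_{d_2}$ when $g$ is odd) is always equal to $\delta$. If $\delta=1$ then $r=g$, so $r$ and $g$ share the same parity and the monic distinguished polynomial is precisely the one whose top coefficient is omitted, forcing that coefficient to be $1$. If $\delta=0$ then $r<g$, and a short parity computation using that $n$ is odd shows the relevant degree bound drops strictly below $d_1$ (resp.\ $d_2$) regardless of the parity of $r$, so that the omitted coefficient is forced to $0$. This closes the check that the recorded data suffice to recover $h_{D,1},h_{D,2}$, and hence $h_D$.

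\emph{Dimension, optimality, and fibers.} A direct case analysis on the parity of $g$, using that $n$ is an odd prime, yields
$$d_1+d_2+1 \;=\; \left\lfloor\tfrac{ng}{2}\right\rfloor + \left\lfloor\tfrac{(n-2)g-1}{2}\right\rfloor + 1 \;=\; (n-1)g,$$
matching the stated codomain $\Fq^{(n-1)g}\times\F_2$. Optimality then follows from Remark~\ref{opt_q}, since $|T_n|=\Theta(q^{(n-1)g})$ and the binary overhead is $O(1)$. For the preimage bound, if $\mathcal{R}([D])=\mathcal{R}([F])$ with $[D],[F]\neq[0]$, then $h_D=h_F$ in $\Fq(C)$ (they share the same normalized form), so Theorem~\ref{thm:hd}(iv) bounds the number of reduced divisors $F$ by $n^t\leq n^g$. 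The class $[0]$ is handled by the explicit convention $(0,\ldots,0)$, and does not collide with any nonzero class, since $h_D$ is non-constant for $[D]\neq[0]$, forcing some stored coefficient to be nonzero.

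The main obstacle is the recoverability step: confirming that $\delta$ alone pins down the omitted leading coefficient across all combinations of parities of $r$ and $g$. The remaining items reduce to a direct invocation of Theorem~\ref{thm:hd} or to straightforward counting.
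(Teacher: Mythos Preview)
Your proof is correct and follows essentially the same route as the paper's: degree bounds from Theorem~\ref{thm:hd}(iii), the identity $d_1+d_2+1=(n-1)g$, optimality via $|T_n|=\Theta(q^{(n-1)g})$, and the fiber bound from Theorem~\ref{thm:hd}(iv). You add a check that $[0]$ does not collide with nonzero classes, which the paper omits; note however that the precise reason is the monic normalization combined with the degree drop for $r<g$ (which forces the leading coefficient of the normalized polynomial to lie among the stored entries), not merely that $h_D$ is non-constant.
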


\begin{proof}
It follows from Theorem~\ref{thm:hd} (iii) that
$$\deg h_{D,1}\leq \left\lfloor\frac{rn}{2}\right\rfloor\leq d_1 \mbox{ and } \deg h_{D,2}\leq\left\lfloor\frac{nr-2g-1}{2}\right\rfloor\leq d_2,$$
hence the polynomials can be written as claimed. Moreover, if $g$ is even and $r<g$, then $$\deg h_{D,1}\leq \left\lfloor \frac{n(g-1)}{2}\right\rfloor\leq d_1-1\;\mbox{ and } \;\delta=0.$$ If $g=r$ is even, then $h_{D,1}$ is monic of degree $d_1$ and $\delta=1$.
If instead $g$ is odd and $r<g$, then $$\deg h_{D,2}\leq \left\lfloor \frac{n(g-1)-2g-1}{2}\right\rfloor\leq d_2-1\;\mbox{ and } \;\delta=0.$$ Finally, if $g=r$ is odd, then $h_{D,2}$ is monic of degree $d_2$ and $\delta=1$. Since $d_1+d_2+1=(n-1)g,$
then $\im\mathcal{R}\subseteq\mathbb{F}_q^{(n-1)g}\times\mathbb{F}_2$ in all cases. $\mathcal{R}$ is optimal since $(n-1)g\lceil\log_2 q\rceil+1=\lceil\log_2 |T_n|\rceil+O(1)$. 
Finally, the representation identifies at most $n^g$ elements by Theorem~\ref{thm:hd} (iv).
\end{proof}

\begin{remark} \label{rmk:opensubset1}
If one chooses to work only with divisors of the form $D = P_1+\ldots+P_g-g\O$, then the last bit in the representation of Corollary~\ref{cor:repr} may be dropped and we have a representation of size $(n-1)g\lceil\log_2 q\rceil$. 
Divisor classes whose reduced representative has this form constitute the majority of the elements of $T_n$. Moreover, there are cases in which the trace zero subgroup consists only of divisor classes represented by reduced divisors of this shape. This is the case e.g.\ for elliptic curves, where $r=1$ if $D\neq 0$. Moreover, Lange \cite[Theorem 2.2]{lange-04} proved that for $g=2$ and $n=3$, all nontrivial elements of $T_3$ are represented by reduced divisors with $r = 2 = g$. 
\end{remark}

In the next theorem we establish some facts that we use for our decompression algorithm.

\begin{theorem}\label{thm:hd2}
Let $[D]=[u,v] \in T_n$ with $D\in\Div^0$ a reduced divisor, and let $h_D=h_{D,1}(x)+y h_{D,2}(x) \in \Fq[x,y]$ be such that $\princdiv(h_D) = \Tr(D)$. Write $D=D_1+\ldots+D_t$, where $D_i\in\Div^0$ are reduced prime divisors defined over $\Fqn$ with Mumford representation $[D_i]=[u_i,v_i]$. Then:
\begin{enumerate}
\item $h_{D,2}\equiv 0 \bmod{u_i}$ if and only if $w(D_i) = \varphi^j(D_k)$ for some $j \in \{0,\ldots,n-1\}$ and some $k \in \{1,\ldots,t\}$.
\item Let $n \ne 2$. Then $w(D_i) = \varphi^j(D_i)$ for some $j\neq 0$ if and only if $D_i\in\pic[2](\Fq)$.
\item Let $n\neq 2$, $\ell,m \geq 0$, and assume that $D_i \ne w(D_i)$. Then $\Tr(D) = m \Tr (D_i) +\ell \Tr(w(D_i)) + \Tr(G)$ for some $G\in\Div^0$, where $\Tr(D_i),\Tr(w(D_i)) \not\leq \Tr(G)$ and $G$ has poles only at $\O$, if and only if $N(u_i)^{\min\{\ell,m\}}$ exactly divides $h_D$.
\end{enumerate}
\end{theorem}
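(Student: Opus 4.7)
Each of the three parts can be proved by translating algebraic conditions on $h_D = h_{D,1}(x) + y\,h_{D,2}(x)$ into divisor-theoretic statements about $\princdiv(h_D) = \Tr(D)$.

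For $(i)$, my plan is to use the correspondence that a semi-reduced prime divisor $E=[u_E,v_E]$ satisfies $E\leq\princdiv(f)$ if and only if $f$ lies in the defining ideal $(u_E,y-v_E)$. Since $D_i\leq D\leq\Tr(D)$, one always has $h_D\in(u_i,y-v_i)$, which reads $h_{D,1}+v_ih_{D,2}\equiv0\pmod{u_i}$. By Lemma~\ref{inverseimages}(i), the primes of $\Tr(D)=\sum_k\Tr(D_k)$ are exactly the $\varphi^j(D_k)$, so the condition $w(D_i)=\varphi^j(D_k)$ for some $j,k$ is equivalent to $w(D_i)\leq\Tr(D)$, equivalently $h_D\in(u_i,y+v_i)$, i.e.\ $h_{D,1}-v_ih_{D,2}\equiv0\pmod{u_i}$. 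Adding and subtracting these congruences gives $h_{D,1}\equiv0$ and $v_ih_{D,2}\equiv0\pmod{u_i}$; since $v_i$ is a unit modulo the irreducible $u_i$ (using $\deg v_i<\deg u_i$ and $v_i\neq 0$), this is equivalent to $h_{D,2}\equiv0\pmod{u_i}$. The degenerate case $v_i=0$ would require a separate local analysis of $h_D$ at the Weierstrass points in the support of $D_i$, using the uniformizer $y$.

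For $(ii)$, I would translate $w(D_i)=\varphi^j(D_i)$ directly on the level of Mumford representations: $[u_i,-v_i]=[u_i^{\varphi^j},v_i^{\varphi^j}]$ is equivalent to $u_i=u_i^{\varphi^j}$ and $-v_i=v_i^{\varphi^j}$. With $n$ prime and $j\neq 0$, the first equation forces $u_i\in\Fq[x]$. Here the assumption $n\neq 2$ becomes essential: since $2$ is then invertible modulo $n$, iterating $\varphi^j$ in the second equation forces $v_i\in\Fq[x]$, after which $-v_i=v_i$ gives $v_i=0$. Consequently $D_i\in\Div(\Fq)$ is supported on Weierstrass points and $2D_i=\princdiv(u_i)$, so $[D_i]\in\pic[2](\Fq)$. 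For the converse, combine $[w(D_i)]=-[D_i]$, $2[D_i]=0$, and $\varphi([D_i])=[D_i]$ with uniqueness of the reduced representative to conclude $w(D_i)=D_i=\varphi^j(D_i)$ for every $j$.

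For $(iii)$, the central identity is $\princdiv(N(u_i))=\Tr(D_i)+\Tr(w(D_i))$, obtained by applying $\princdiv$ to each factor of $N(u_i)=\prod_{j=0}^{n-1}u_i^{\varphi^j}$ and using $\princdiv(u_i)=D_i+w(D_i)$. Then $N(u_i)^s$ divides $h_D$ in $\Fq[x,y]/(y^2-f)$ if and only if $h_D/N(u_i)^s$ is regular away from $\O$, i.e.\ the finite part of
\[
\princdiv(h_D)-s\,\princdiv(N(u_i))=(m-s)\Tr(D_i)+(\ell-s)\Tr(w(D_i))+\Tr(G)
\]
is effective. Since $G$ has poles only at $\O$, $\Tr(G)$ is effective at finite places, and the maximality assumptions $\Tr(D_i),\Tr(w(D_i))\not\leq\Tr(G)$ provide test points at which the constraints $s\leq m$ and $s\leq \ell$ are tight, yielding $s_{\max}=\min(\ell,m)$. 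The hard part will be verifying that at points where the supports of $\Tr(D_i)$, $\Tr(w(D_i))$, and $\Tr(G)$ all meet, so that $\princdiv(N(u_i))$ picks up extra multiplicity and the effectiveness inequality mixes all three contributions, the resulting pointwise constraint is never strictly tighter than $\min(\ell,m)$; this should follow from $(m+\ell)/2\geq\min(\ell,m)$ together with non-negativity of $\Tr(G)$ at finite places.
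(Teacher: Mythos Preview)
Your argument tracks the paper's in all three parts: the ideal-membership reformulation for (i), the Galois descent on Mumford coordinates for (ii), and the identity $\princdiv(N(u_i))=\Tr(D_i)+\Tr(w(D_i))$ for (iii) are exactly the mechanisms the paper uses, with only cosmetic differences (your $\varphi^{2j}$-iteration in (ii) replaces the paper's square-then-descend step $\nu^2=(\nu^2)^{\varphi^j}$).

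The ``hard part'' you flag in (iii) does not arise. Under the hypotheses $n\neq 2$ and $D_i\neq w(D_i)$, part (ii) rules out $w(D_i)=\varphi^j(D_i)$ for every $j$, so by Lemma~\ref{inverseimages} the $\Fq$-divisors $\Tr(D_i)$ and $\Tr(w(D_i))$ are distinct primes with disjoint affine supports; the conditions $\Tr(D_i),\Tr(w(D_i))\not\leq\Tr(G)$ then force disjointness from the support of $\Tr(G)$ as well. Your effectivity test therefore decouples pointwise and gives $s_{\max}=\min\{\ell,m\}$ with no mixed inequality to analyse. The paper does the same thing constructively, writing $h_D=N(u_i)^{\min\{\ell,m\}}h'$ and checking $N(u_i)\nmid h'$ from $\Tr(w(D_i))\not\leq\princdiv(h')$.

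For (i), your caution at $v_i=0$ is well placed, but the proposed local analysis cannot rescue the stated equivalence, because it actually fails there. Take $g=1$, $n=3$, $E:y^2=f(x)$ with $f\in\Fq[x]$ an irreducible cubic, and $P=(X,0)$ for $X$ a root of $f$ in $\F_{q^3}$. Then $D=D_1=P-\O$ is prime with $v_1=0$, so $w(D_1)=D_1=\varphi^0(D_1)$ and the right-hand side of (i) holds; yet $h_D=y$, so $h_{D,2}=1\not\equiv 0\pmod{x-X}$. The paper's proof glosses over this case too. In the paper, (i) is only invoked in the decompression analysis when $D_i\neq w(D_i)$, where your argument and the paper's are complete; the defect is in the statement rather than in your method.
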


\begin{proof}
$(i)$ We have $h_{D,2}(x)\equiv 0\bmod{u_i}$ if and only if $h_D(x,y)\equiv h_{D,1}(x) \equiv h_{w(D)}(x,y) \bmod{u_i}$. Since $D_i\leq \Tr(D)$, this is also equivalent to $w(D_i)\leq\Tr(D)$. Since $D_i$ is prime, $w(D_i)$ is also prime and $w(D_i)\leq\Tr(D)$
if and only if $w(D_i)=\varphi^j(D_k)$ for some $j\in\{0,\ldots,n-1\}$ and some $k\in\{1,\ldots,t\}$ by Lemma~\ref{inverseimages} (i). 

$(ii)$ We only prove the nontrivial implication. If $w(D_i) = \varphi^j(D_i)$ for some $j\neq 0$, then $u_i \in \Fq[x]$ and 
$-\nu = \nu^{\varphi^j}$ for all coefficients $\nu$ of $v_i$. Hence $\nu^2 = (\nu^2)^{\varphi^j}$, so $\nu \in \F_{q^{2j}} \cap \Fqn = \Fq$. 
Therefore also $v_i \in \Fq[x]$, hence $ w(D_i)=\varphi^j(D_i)=D_i\in\pic(\Fq)$.

$(iii)$ Let $\Tr(D) =m\Tr (D_i) + \ell \Tr(w(D_i)) + \Tr(G)$ for some divisor $G\in\Div^0$, with poles only at $\O$ and $\Tr(D_i),$ $\Tr(w(D_i)) \not\leq \Tr(G)$. Assume that $m \geq \ell$, since the proof of the other case is similar. Then
$$\princdiv(N(u_i)^{\ell} h_{D_i}^{m-\ell}h_G)=\ell \Tr(D_i) + \ell \Tr(w(D_i)) + (m-\ell)\Tr(D_i) + \Tr(G)=\Tr(D) = \princdiv(h_D),$$
so $h_D = N(u_i)^{\ell} h_{D_i}^{m-\ell}h_G$ up to multiplication by a constant, hence $N(u_i)^{\ell}\mid h_D$. 
If $N(u_i)$ also divides $h_{D_i}^{m-\ell}h_G$, then $\Tr(D_i) + \Tr(w(D_i)) \leq (m-\ell) \Tr(D_i) + \Tr(G)$. Since $\Tr(w(D_i)) \not\leq \Tr(G)$ is prime by Lemma~\ref{inverseimages} (ii), then $\Tr(w(D_i))=\Tr(D_i)$ and therefore $w(D_i)=\varphi^j(D_i)$ for some $j$. This yields a contradiction by (ii). Therefore, $N(u_i)^{\ell}$ exactly divides $h_D$.

Conversely, assume that $h_D = N(u_i)^{\ell} h$ for some $\ell$, where $h$ is a polynomial and $N(u_i) \nmid h$.
Then $\Tr(D) = \princdiv(h_D) = \ell \Tr(D_i) + \ell \Tr(w(D_i)) + \princdiv(h)$, and $\Tr(D_i)+\Tr(w(D_i))\not\leq \princdiv(h)$. Say e.g. that $\Tr(w(D_i)) \not\leq \princdiv(h)$, and $k$ is maximal such that $k \Tr(D_i) \leq \princdiv(h)$. Then
$$ \Tr(D) = m \Tr(D_i) + \ell \Tr(w(D_i)) + F$$
where $m = \ell+k$ and $\Tr(D_i), \Tr(w(D_i)) \not\leq \princdiv(h)-k\Tr(D_i)=:F$. By Theorem~\ref{thm:hd} (iv), $F=\Tr(D)-m\Tr(D_i)-\ell\Tr(w(D_i))=\Tr(G)$, 
where $G\in\Div^0$ is a reduced divisor with poles only at $\O$ of the form $$G=D-\sum_{l=1}^m \varphi^{a_l}(D_i)-\sum_{l=1}^{\ell}\varphi^{b_l}(D_j)$$ for some $a_l, b_l\in\{0,\ldots,n-1\}$.
\end{proof}

\begin{remark} \label{rmk:char2}
The results in this section may be generalized to elliptic and hyperelliptic curves over fields of characteristic 2 by defining $H_D = h_D (h_D \circ w)$. It is easy to check that we obtain a function $h_D$ with the same properties as in Theorem~\ref{thm:hd} and Corollary~\ref{cor:repr}. Some caution is needed in adapting Theorem~\ref{thm:hd2}. 
\end{remark}

\subsection{Computing the rational function}

It is easy to compute $h_D$ using Cantor's Algorithm (see \cite{cantor-87}) and a generalization of Miller's Algorithm (see \cite{miller-04}) as follows. For $[D_1], [D_2] \in \pic$ given in Mumford representation, Cantor's Algorithm returns a reduced divisor $D_1 \oplus D_2$ and a function $a$ such that $D_1 + D_2 = D_1 \oplus D_2 + \princdiv(a)$. We denote this as $\cantor(D_1,D_2) = (D_1 \oplus D_2,a)$. For completeness, we give Cantor's Algorithm in Algorithm \ref{algo:cantor}. Lines 1--3 are the composition of the divisors to be added, and the result of this is reduced in lines 4--8.

\begin{algorithm}
\begin{algorithmic}[1]
\caption{Cantor's Algorithm including rational function}
\label{algo:cantor}
 \Require $[u_1,v_1], [u_2,v_2] \in \pic$ in Mumford representation
 \Ensure $[u,v]$ in Mumford representation and $a$ such that $[u,v] + \princdiv(a) = [u_1,v_1] + [u_2,v_2]$
 \State $a \gets \gcd(u_1,u_2,v_1+v_2)$, find $e_1,e_2,e_3$ such that $a = e_1u_1 + e_2u_2 + e_3(v_1+v_2)$
 \State $u \gets u_1u_2/a^2$
 \State $v \gets (u_1v_2e_1 + u_2v_1e_2 + (v_1v_2+f)e_3)/a \bmod u$
 \While{$\deg u > g$}
  \State $\tilde{u} \gets \monic((f-v^2)/u), \tilde{v} \gets -v \bmod{\tilde{u}}$
  \State $a \gets a \cdot (y-v)/\tilde{u}$
  \State $u \gets \tilde{u}, v \gets \tilde{v}$
 \EndWhile
 \State \Return $[u,v], a$
\end{algorithmic}
\end{algorithm}

The following iterative definition will allow us to compute $h_D$ with a Miller-style algorithm. For a function $h$ we denote by $h^{\varphi}$ the application of the Frobenius automorphism $\varphi : \Fqbar \rightarrow \Fqbar$ coefficientwise to the function $h$. The proof of the next proposition is standard, and left to the reader.

\begin{proposition} \label{thm:recursivehd}
Let $D = [u,v]$ be a divisor on $C$, and let $D_i=\varphi^i(D)$ for $i \geq 0$.
Let $h^{(1)} = u$ as a function on $C$, and define recursively the functions
$$ h^{(i+j)} = h^{(i)} \cdot (h^{(j)})^{\varphi^i} \cdot a^{-1} $$
where $a$ is given by Cantor's Algorithm according to 
$$w(D_0 \oplus \ldots \oplus D_{i-1}) + w(D_i \oplus \ldots \oplus D_{i+j-1}) = w(D_0 \oplus \ldots \oplus D_{i+j-1}) + \princdiv(a)$$
for $i,j \geq 1$. Then for all $i \geq 1$ we have 
$$ \princdiv(h^{(i)}) = D_0 + \ldots + D_{i-1} + w(D_0 \oplus \ldots \oplus D_{i-1}). $$
If $[D] \in T_n$, then $$h^{(n-1)} = h_D.$$
\end{proposition}

Algorithm \ref{algo:millercantor} takes as an input the Mumford representation of $[D]\in T_n$ and the binary representation of $n-1$, and returns the function $h_D$.

\begin{algorithm}
\begin{algorithmic}[1]
\caption{Miller-style double and add algorithm for computing $h_D$}
\label{algo:millercantor}
 \Require $[D] = [u,v] \in T_n$ and $n-1 = \sum_{j=0}^s n_j 2^j$
 \Ensure $h_D$
 \State $h \gets u, R \gets w(D), Q \gets w(\varphi(D)), i \gets 1$
 \For {$j = s-1, s-2, \ldots,1, 0$}
   \State $(R,a) \gets \cantor(R,\varphi^i(R)), h \gets h \cdot h^{\varphi^i} \cdot a^{-1}, Q \gets \varphi^i(Q), i \gets 2i$
   \If {$n_j = 1$}
     \State $(R,a) \gets \cantor(R,Q), h \gets h \cdot u^{\varphi^i} \cdot a^{-1}, Q \gets \varphi(Q), i \gets i+1$
   \EndIf
 \EndFor
 \State \Return $h$
\end{algorithmic}
\end{algorithm}

\begin{remark}\label{largeg}
It is also possible to determine the coefficients of $h_D$ by solving a linear system of size about $gn \times gn$. 
\end{remark}

\subsection{Compression and decompression algorithms}

We propose the compression and decompression algorithms detailed in Algorithms \ref{algo:compr} and \ref{algo:decompr}. We denote by $\lc$ the leading coefficient of a polynomial. 
We only discuss the case $n\geq 3$, since in the case $n=2$ the representation consists of $u(x)$ as seen in Proposition~\ref{n=2}.

The compression algorithm follows immediately from Corollary~\ref{cor:repr} and Algorithm~\ref{algo:millercantor}.
The strategy of the decompression algorithm is as follows. From the input $\alpha = \mathcal{R}(D)$, we recompute $h_{D,1}$ and $h_{D,2}$, and then $H_D$. Then we factor $H_D$ in order to obtain the $u$-polynomials of (one Frobenius conjugate of each of) the $\Fqn$-rational prime divisors in $D$. This is consistent with the fact that $\Tr(D)$ only contains information about the conjugacy classes of these prime divisors. Afterwards, we compute the corresponding $v$-polynomial for each $u$-polynomial. In this way, if $D=D_1+\ldots+D_t$ is the decomposition of $D$ as a sum of  $\Fqn$-rational prime divisors, for each $i\in\{1,\ldots,t\}$ we recover one of the Frobenius conjugates of $D_i$, which we denote by $D'_i$. 
The divisor $D'_1+\ldots+D'_t$ corresponds to the class $\mathcal{R}^{-1}(\alpha)$ by Theorem~\ref{thm:hd} (iv). 
We always compute a reduced representative $D'_1+\ldots+D'_t$ of the class $\mathcal{R}^{-1}(\alpha)$, as discussed in Remark~\ref{rem:reduced}. 

\begin{algorithm}
\begin{algorithmic}[1]
\caption{Compression, $n\geq 3$}
\label{algo:compr}
 \Require $[D] = [u,v] \in T_n$
 \Ensure Representation $(\alpha_0,\ldots,\alpha_{(n-1)g}) \in \Fq^{(n-1)g} \times \F_2$ of $[D]$
 \State $r \gets\deg u$ 
 \State compute $h_D(x,y) = h_{D,1}(x) + y h_{D,2}(x)$ (see Algorithm \ref{algo:millercantor} and Remark~\ref{largeg})
 \State $d_1 \gets \lfloor \frac{ng}{2} \rfloor$
 \State $d_2 \gets \lfloor \frac{ng-2g-1}{2} \rfloor$
 \If{$r$ even}
 \State $h_{D,1} \gets h_{D,1}/\lc(h_{D,1})$ \Comment{Notation: $h_{D,1} =\gamma_{d_1} x^{d_1} + \gamma_{d_1-1}x^{d_1-1}+\ldots+\gamma_1x+\gamma_0 $ monic}
 \State $h_{D,2} \gets h_{D,2}/\lc(h_{D,1})$ \Comment{Notation: $h_{D,2} = \beta_{d_2}x^{d_2} + \beta_{d_2-1}x^{d_2-1} + \ldots + \beta_1x+\beta_0$}
 \Else
 \State $h_{D,1} \gets h_{D,1}/\lc(h_{D,2})$ \Comment{Notation: $h_{D,1} = \gamma_{d_1}x^{d_1} + \gamma_{d_1-1}x^{d_1-1}+\ldots+\gamma_1x+\gamma_0 $}
 \State $h_{D,2} \gets h_{D,2}/\lc(h_{D,2})$ \Comment{Notation: $h_{D,2} =\beta_{d_2} x^{d_2} + \beta_{d_2-1}x^{d_2-1} + \ldots + \beta_1x+\beta_0$ monic}
 \EndIf
  \If{$g$ even}
 \State \Return $(\beta_0,\ldots,\beta_{d_2},\gamma_0,\ldots,\gamma_{d_1})$
 \Else
 \State \Return $(\gamma_0,\ldots,\gamma_{d_1},\beta_0,\ldots,\beta_{d_2})$
 \EndIf
\end{algorithmic}
\end{algorithm}

\begin{algorithm}
\begin{algorithmic}[1]
\caption{Decompression, $n\geq 3$}
\label{algo:decompr}
 \Require $(\alpha_0,\ldots,\alpha_{(n-1)g}) \in \Fq^{(n-1)g} \times \F_2$
 \Ensure one reduced $D\in\Div^0(\Fqn)$ such that $[D] \in T_n$ has representation $(\alpha_0,\ldots,\alpha_{(n-1)g})$
 \State $d_1 \gets \lfloor \frac{ng}{2} \rfloor$
 \State $d_2 \gets \lfloor \frac{ng-2g-1}{2} \rfloor$
 \If{$g$ even}
 \State $h_{D,1}(x) \gets \alpha_{(n-1)g}x^{d_1} + \ldots + \alpha_{d_2+2}x + \alpha_{d_2+1}$
 \State $h_{D,2}(x) \gets \alpha_{d_2}x^{d_2} + \alpha_{d_2-1}x^{d_2-1} + \ldots + \alpha_1 x + \alpha_0$
 \Else
 \State $h_{D,1}(x) \gets \alpha_{d_1}x^{d_1} + \ldots + \alpha_1 x + \alpha_0 $
 \State $h_{D,2}(x) \gets \alpha_{(n-1)g}x^{d_2} + \ldots + \alpha_{d_1+2}x + \alpha_{d_1+1}$
 \EndIf
 \State $H_D(x) \gets h_{D,1}(x)^2 - f(x)h_{D,2}(x)^2$
 \State factor $H_D(x) = U_1(x)^{e_1} \cdot \ldots \cdot U_{m}(x)^{e_{m}}$ with $U_i \in \Fq[x]$ irreducible and pairwise distinct, $e_i \in \{1,\ldots,gn\}$
 \State $L \gets $ empty list
 \For{$i=1,\ldots,m$}
 \If{$U_i(x)$ is irreducible over $\Fqn$} \Comment{$U_i$ comes from an $\Fq$-rational prime divisor}
 \State $e_i \gets e_i/n$
 \EndIf 
 \State $U(x) \gets $ one irreducible factor over $\Fqn$ of $U_i(x)$ 
 \If{$h_{D,2}(x) \not\equiv 0 \bmod{U(x)}$}
 \State $V(x) \gets -h_{D,1}(x)h_{D,2}(x)^{-1} \bmod{U(x)}$
 \State append $[U(x),V(x)]$ to $L, e_i$ times
 \Else \Comment{$h_{D,2}(x) \equiv 0 \bmod{U(x)}$}
 \If{$f(x) \equiv 0 \bmod{U(x)}$} \Comment{$V(x)=0$ and $D_i=w(D_i)$}
 \State append $[U(x),0], [U(x)^{\varphi},0],\ldots,[U(x)^{\varphi^{e_i-1}},0]$ to $L$
 \Else \Comment{$V(x)\neq 0$ and $D_i\neq w(D_i)$}
 \State compute $s, h_{\Delta}$ such that $h_D = U_i(x)^s h_{\Delta}$ and $U_i(x) \nmid h_{\Delta}$
 \If{$s<e_i/2$} 
 \State $V(x) \gets -h_{\Delta,1}(x) h_{\Delta,2}(x)^{-1} \bmod U(x)$ 
 \State append $[U(x),V(x)]$ to $L, e_i-s$ times
 \State append $[U(x)^{\varphi},-V(x)^{\varphi}]$ to $L, s$ times
 \Else \Comment{$s=e_i/2$}
 \State $V(x) \gets \sqrt{f(x)} \bmod{U(x)}$
 \State append $[U(x),V(x)],[U(x)^{\varphi},-V(x)^{\varphi}]$ to $L, s$ times
 \EndIf
 \EndIf
 \EndIf
 \EndFor
 \Comment{Notation: $L = [D_1,\ldots,D_{t}]$}
 \State \Return $D=D_1+\ldots+D_t$ 
\end{algorithmic}
\end{algorithm}

It is easy to see that both algorithms terminate in polynomial time in $\log q$. Correctness of the compression algorithm follows from Proposition~\ref{thm:recursivehd}.
We now show that the decompression algorithm returns the correct output. 

\begin{theorem}
Decompression Algorithm \ref{algo:decompr} operates correctly, i.e.\ for any input $\mathcal{R}(D)$, where $[D] \in T_n$, it returns a reduced divisor $D'$ such that $[D'] \in T_n$ and $\mathcal{R}(D) = \mathcal{R}(D')$. 
\end{theorem}

\begin{proof}
Let $D=D_1+\ldots+D_t$, where $D_i$ are reduced prime divisors defined over $\Fqn$. If $D_i=\varphi^j(D_k)$ for some $k\neq i$, then $\mathcal{R}(D)=\mathcal{R}(\tilde{D})$ where $\tilde{D}=\sum_{j\neq i,k} D_j+2D_i$. $\tilde{D}$ is reduced if $D_i\neq w(D_i)$. If that is the case, we may assume without loss of generality that \begin{equation}\label{wlog}
 D_i\neq\varphi^j(D_k) \mbox{ for any } k\neq i.\end{equation}
Let $[u_i,v_i]$ be the Mumford representation of $D_i$, $u_i\in\Fqn[x]$ irreducible. We have
$$H_D(x)
=\prod_{i=1}^t u_i^{1+\varphi+\ldots+\varphi^{n-1}}=\prod_{i=1}^m U_i(x)^{e_i},$$ 
 where  $U_i\in\Fq[x]$ are irreducible and $U_i\neq U_j$ if $i\neq j$, $m\leq t$. Up to reindexing, $U_i=u_i$ if $u_i\in\Fq[x]$ and $U_i=N(u_i)$ otherwise, for $i\leq m$. If $u_i\in\Fq[x]$, then $u_i^{1+\varphi+\ldots+\varphi^{n-1}}=u_i^n=U_i^n$, hence $n\mid e_i$ and we replace $e_i$ by $e_i/n$, since $\Tr(D_i)=nD_i$. Notice that  by Lemma \ref{inverseimages} (ii) $U_i$ is an $\Fq[x]$-irreducible factor of $H_D(x)$ independently of whether $u_i\in\Fq[x]$ or not. Notice moreover that $u_i\in\Fq[x]$ if and only if $U_i$ is irreducible in $\Fqn[x]$. If $U_i$ is reducible in $\Fqn[x]$, then $u_i\in\Fqn[x]$ is one of its irreducible factors. Summarizing, each $D_i$ corresponds exactly to a set of $n$ $\Fqn[x]$-irreducible factors of $H_D$, and these factors can be correctly grouped by first computing the $\Fq[x]$-factorization of $H_D=N(u)$.
 
Fix $i\in\{1,\ldots,m\}$ and let $U(x)$ be an $\Fqn[x]$-irreducible factor of $U_i(x)$,  i.e., $U(x)$ is a Frobenius conjugate of $u_i(x)$. 
If $U\nmid h_{D,2}$ there exist polynomials $k(x),l(x)\in\Fqn[x]$ such that $k(x)h_{D,2}=1+l(x)U(x)$. Hence $k(x)(h_{D,1}(x)+yh_{D,2}(x))\equiv y+k(x)h_{D,1} \bmod U.$ Since $h_{D,1}+yh_{D,2}\equiv 0\bmod (U,y-V)$, then $V+k(x)h_{D,1}\equiv 0\bmod U$, hence $$V\equiv -h_{D,1}h_{D,2}^{-1}\bmod U.$$ 
Since $U\nmid h_{D,2}$, by Theorem~\ref{thm:hd2} (i) no Frobenius conjugate of $w(D_i)$ appears among $D_1,\ldots,D_t$. 
Notice that in particular $D_i \neq w(D_i)$, hence $V\neq 0$. Therefore, $D_i$ appears in $D$ with multiplicity $e_i$ under assumption~(\ref{wlog}).

If $U\mid h_{D,2}$, it follows from Theorem~\ref{thm:hd2} (i) that $w(D_i)=\varphi^j(D_k)$ for some $0\leq j\leq n-1$ and $1\leq k\leq t$. 
We distinguish the cases when $D_i= w(D_i)$ or $D_i \ne w(D_i)$.
The case when $D_i=w(D_i)$ is treated in lines 22--23 of the algorithm. Since $y^2-f\in (U,y-V)$, then $V^2\equiv f\bmod U$. Therefore $f\equiv 0\bmod U$ if and only if $V=0$, which is equivalent to $D_i=w(D_i)$ is equivalent to $v_i=0$. Practically, one can decide whether $D_i=w(D_i)$ by checking whether $U \mid f$. If this is the case, it suffices to set $V=0$. Since $U^{e_i}$ exactly divides $H_D$, $D_i$ and its Frobenius conjugates appear in $D$ with total multiplicity $e_i$. The divisor $D$ is reduced, therefore it must contain in its support $e_i$ distinct Frobenius conjugates of $D_i$, e.g. $D_i,\varphi(D_i),\ldots,\varphi^{e_i-1}(D_i)$, each with multiplicity one.

The last case is treated in lines 25--33 of the algorithm. In this case $D_i \ne w(D_i)$, but $w(D_i)=\varphi^j(D_k)$ for some $k\in\{1,\ldots,t\}$. This is equivalent to $U\mid h_{D,2}$ and $U\nmid f$, as we proved above. Since $n \ne 2$, then $k\ne i$ by Theorem \ref{thm:hd2} (ii). In addition, since $D$ is reduced, then $D_k\neq w(D_i)$, hence $D_i, D_k\not\in\pic(\F_q)$ and $U_i=N(U)$, $U\in\F_{q^n}[x]\setminus\F_q[x]$.
Write $\Tr(D)=m\Tr(D_i)+\ell\Tr(w(D_i))+\Tr(G)$ for some $m,\ell>0$ such that $\Tr(D_i),\Tr(w(D_i))\not\leq\Tr(G)$. By Theorem~\ref{thm:hd2} (iii), $s:=\min\{m,\ell\}$ may be computed as the exponent for which $U_i^s\mid h_D$ and $U_i^{s+1}\nmid h_D$. Equivalently, among $D_1,\ldots,D_t$ there are at least $s$ Frobenius conjugates of $D_i$ (including $D_i$) and at least $s$ Frobenius conjugates of $w(D_i)$ (including $D_k$). No divisor can be a Frobenius conjugate of both, and for one among $D_i$ and $w(D_i)$ the multiset $\mathcal{D}=\{D_1,\ldots,D_t\}$ contains exactly $s$ of its Frobenius conjugates. Remove $s$ of the Frobenius conjugates of $D_i$ and $s$ of the Frobenius conjugates of $w(D_i)$ from $\mathcal{D}$, and let $\Delta$ be the sum of the remaining divisors, counted with the multiplicity in which they appear in the multiset. Then $h_D=U_i^sh_{\Delta}$, where $h_{\Delta}=h_{\Delta,1}+yh_{\Delta,2}$ corresponds to the divisor $\Delta$. 
By Theorem~\ref{thm:hd2} (i), $U\nmid h_{\Delta,2}$, since $$\Tr(D_i)+\Tr(w(D_i))\not\leq\princdiv(h_{\Delta})=\Tr(\Delta)=(m-s)\Tr(D_i)+(\ell-s)\Tr(w(D_i))+\Tr(G).$$
If $s=e_i/2$, then the support of $D$ contains $e_i/2$ Frobenius conjugates of $D_i$ and $e_i/2$ Frobenius conjugates of $D_k$. Since it contains $e_i$ Frobenius conjugates of $D_i$ and $D_k$ in total, then $s=m=\ell$ and $V$ may be computed as $\sqrt{f} \bmod U$. Then $D$ contains exactly $e_i/2$ Frobenius conjugates of $[U,V]$ and $e_i/2$ Frobenius conjugates of $[U,-V]$. Notice that in this situation we do not need to distinguish between (Frobenius conjugates of) $D_i$ and $w(D_i)$, since they appear in $D$ with the same multiplicity.
If $s<e_i/2$, then $h_D=U_i^{\ell}h_{\Delta}$ and $\Delta$ contains $e_i-2s$ Frobenius conjugates of one among $D_i$ and $w(D_i)$. We already showed that $U\nmid h_{\Delta,2}$, hence the $V$ polynomial of the divisor which appears in $\Delta$ can be computed as $V=-h_{\Delta,1}{h_{\Delta,2}}^{-1}\bmod U$. In this case, $D$ contains $s$ Frobenius conjugates of $[U,-V]$ and $e_i-s$ Frobenius conjugates of $[U,V]$. 

Finally, we show that the divisor returned by Algorithm~\ref{algo:decompr} is reduced. To this end, we check that the algorithm does not add both a divisor and its involution to the list $L$, and in particular when a divisor is $2$-torsion, we check that it is added with multiplicity 1. Since for each $i$ such that $U\nmid h_{D,2}$ we have computed a unique $V\neq 0$, we only need to consider the cases where $U\mid h_{D,2}$. 
In the case when $U\mid f$ we have $D_i = w(D_i)$. Since $D$ is reduced, then $e_i \leq n$, and if $e_1\neq 1$ then $D_i\not\in\pic(\F_q)$.
In particular, $D_i,\varphi(D_i),\ldots,\varphi^{e_i-1}(D_i)$ are distinct.  
If $U\nmid f$, then we showed that $D_i,\varphi(D_i)\ne w(D_i)$ and $D_i\neq \varphi(D_i)$.
The divisors $D_i=[U,V]$ and $w(\varphi(D_i))=[U^{\varphi}, -V^{\varphi}]$ can be added with multiplicity greater than one since they are not 2-torsion and not one the involution of the other. 
\end{proof}

\subsection{Group operation}

An important question in the context of point compression is how to perform the group operation. For some compression methods for (hyper)elliptic curves, formulas or algorithms for performing the group operation in compressed coordinates are available. For example, the Montgomery ladder (see \cite{montgomery-87}) computes the $x$-coordinate of an elliptic curve point $kP$ from the $x$-coordinate of $P$. This method may be generalized to genus 2 hyperelliptic curves (see \cite{gaudry-07}). There is also an algorithm to compute pairings using the $x$-coordinates of the input points only (see \cite{galbraith-lin-09}). 

In such a situation, the crucial question is whether it is more efficient to perform the operation in the compressed coordinates, or to decompress, perform the operation in the full coordinates, and compress again.
Implementation practice shows that it is usually more efficient to use the second method (at least when side-channel attack resistance is not crucial), and most recent speed records for scalar multiplication on elliptic curves have been set using algorithms that need the full point, see e.g.\ \cite{ed25519,longa-sica-12,lambda,faz-hernandez-13}. Timings typically ignore the additional cost for point decompression, but there is strong evidence that on a large class of elliptic curves the second approach is faster. Moreover, Galbraith and Lin show in \cite{galbraith-lin-09} that for computing pairings, the second approach is faster whenever the embedding degree is greater than 2.

In this paper we do not provide an efficient algorithm for scalar multiplication of compressed elements of the trace zero subgroup. However, we believe that this is not a major drawback. On the basis of the results outlined above, we expect that the second method would be faster, and hence it is reasonable to use this method when computing with compressed elements of a trace zero subgroup: Decompress the element, perform the operation in $\pic(\Fqn)$, and compress the result. Since our compression and decompression algorithms are very efficient, this adds only little overhead. Moreover, scalar multiplication is considerably more efficient for trace zero divisors than for general divisors in $\pic(\Fqn)$, due to a speed-up using the Frobenius endomorphism, as pointed out by Frey \cite{frey-99} and studied in detail by Lange \cite{lange-phd,lange-04} and subsequently by Avanzi and Cesena~\cite{avanzi-cesena-07}.

\section{Representation for elliptic curves} \label{sec:ec}

Elliptic curves are simpler and better studied than hyperelliptic curves. In particular, the Picard group of an elliptic curve is isomorphic to the curve itself. Therefore one can work with the group of points of the curve, and point addition is given by simple, explicit formulas. Finding a rational function with a given principal divisor can also be made more efficient. For all these reasons, the results and methods from Section \ref{sec:rep} can be simplified and made explicit for the family $\T_{n,1}$ of trace zero varieties of elliptic curves, with respect to a field extension of fixed degree $n$.

Let $E: y^2 = f(x)$ denote an elliptic curve defined over $\Fq$. The trace zero subgroup $T_n$ of $E(\Fqn)$ is then the group of all points $P$ with trace {\it equal} to zero. We consider only $n \geq 3$, and refer to~\cite{gorla-massierer-1} for the case $n=2$.

\begin{notation}\label{not}
Write $P_i = \varphi^i(P)$ for $i = 0,\ldots,n-1$. Let $\ell_i(x,y) = 0, i = 1,\ldots,n-2,$ be the equation of the line passing through the points $P_0 \oplus \ldots \oplus P_{i-1}$ and $P_i$. Let $v_i(x,y) = 0, i = 1,\ldots,n-3,$ be the equation of the vertical line passing through the point $P_0 \oplus \ldots \oplus P_i$. 
\end{notation}

The following is obtained from Theorems~\ref{thm:hd} and~\ref{thm:hd2} in the case that the curve is elliptic. The proof that $h_P$ has the form claimed is an easy calculation, which is left to the reader.

\begin{corollary} \label{cor:hp}
Let $n \geq 3$ prime. For any $P \in T_n \setminus \{ \O \}$, let $$h_P = \frac{\ell_1 \cdot \ldots \cdot \ell_{n-2}}{v_1 \cdot \ldots \cdot v_{n-3}} \in \Fq(E),$$
where $\ell_j$ and $v_j$ are the lines defined in Notation \ref{not}. Then:
\begin{enumerate}
 \item $\princdiv(h_P) = P_0 + \ldots + P_{n-1} - n\O$.
 \item $h_P(x,y) =  h_{P,1}(x) + y h_{P,2}(x) $ for some $h_{P,1}, h_{P,2}\in \Fq[x]$.
 \item $H_P = h_{P,1}^2 - fh_{P,2}^2$ has degree $n$, and its zeros are exactly the $x$-coordinates of $P_0,\ldots,P_{n-1}$.
 \item $\deg h_{P,1} \leq \frac{n-1}{2}$ and $\deg h_{P,2} = \frac{n-3}{2}$.
 \item If $Q$ is such that $h_P = h_Q$, then $Q = \varphi^j(P)$ for some $j \in \{0,\ldots,n-1\}$.
 \item $h_{P,2}(X) \ne 0$ for all $x$-coordinates of $P_0,\ldots,P_{n-1}$.
\end{enumerate}
\end{corollary}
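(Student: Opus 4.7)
The plan is to reduce most parts of the corollary directly to Theorems~\ref{thm:hd} and~\ref{thm:hd2}, while verifying the explicit closed form for $h_P$ in part (i) by an independent telescoping argument. Since an elliptic curve has genus $g=1$ and a nonzero class $[P]=[x-X,Y]\in T_n$ has $r=1$, the associated reduced divisor $D=P-\O$ is automatically prime, so in the notation of Section~\ref{sec:rep} we have $t=1$, and all hypotheses of the previous theorems specialize cleanly.

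For (i), I would introduce $Q_i:=P_0\oplus\ldots\oplus P_i$ and use the chord--tangent construction to obtain $\princdiv(\ell_i)=Q_{i-1}+P_i+w(Q_i)-3\O$ for $1\leq i\leq n-2$ (with $Q_0=P_0$), and $\princdiv(v_i)=Q_i+w(Q_i)-2\O$ for $1\leq i\leq n-3$. Summing the first family and subtracting the second, the intermediate contributions $Q_1,\ldots,Q_{n-3}$ and $w(Q_1),\ldots,w(Q_{n-3})$ cancel by telescoping, leaving
\[ \princdiv(h_P)=P_0+P_1+\ldots+P_{n-2}+w(Q_{n-2})-n\O. \]
Because $[P]\in T_n$ one has $Q_{n-1}=\O$, so $w(Q_{n-2})=-Q_{n-2}=P_{n-1}$, proving (i). Parts (ii)--(v) are then immediate specializations: (ii) follows from Theorem~\ref{thm:hd}(i); (iii) from Theorem~\ref{thm:hd}(ii) combined with $N(x-X)=\prod_{j=0}^{n-1}(x-X^{q^j})$; (iv) from the degree bounds of Theorem~\ref{thm:hd}(iii), with equality for $h_{P,2}$ since $r=1$ is odd and $n\geq 3$; and (v) from Theorem~\ref{thm:hd}(iv) with $t=1$.

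The hardest part is (vi), where I expect the real obstacle. Assume for contradiction that $h_{P,2}(X_j)=0$ for some $j$. Because $h_{P,2}\in\Fq[x]$, Frobenius conjugation forces $h_{P,2}(X)=0$ as well, so $h_{P,2}\equiv 0\bmod(x-X)$. Applying Theorem~\ref{thm:hd2}(i) to the prime divisor $D=P-\O$ yields $w(D)=\varphi^k(D)$ for some $k\in\{0,\ldots,n-1\}$, i.e.\ $-P=\varphi^k(P)$. If $P\in E(\Fq)$, then $\varphi^k(P)=P$, so $2P=\O$; combined with $nP=\Tr(P)=\O$ and $\gcd(n,2)=1$, this forces $P=\O$, contradicting $P\neq\O$. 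If $P\notin E(\Fq)$, Frobenius acts with order $n$ on $P$. When $k\neq 0$, applying $\varphi^k$ to $-P=\varphi^k(P)$ gives $\varphi^{2k}(P)=P$, so $n\mid 2k$; but $\gcd(2k,n)=1$ since $n$ is an odd prime with $0<k<n$, a contradiction. When $k=0$, $P$ is $2$-torsion and the Frobenius orbit $\{P_0,\ldots,P_{n-1}\}$ consists of $n$ distinct non-identity $2$-torsion points of $E$; since there are at most three such points, necessarily $n=3$. But for $n=3$ part~(iv) gives $\deg h_{P,2}=0$ with equality, so $h_{P,2}$ is a nonzero constant and cannot vanish at $X$, completing the contradiction.
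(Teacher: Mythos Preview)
Your proof is correct and follows precisely the route the paper intends: the paper states only that the corollary ``is obtained from Theorems~\ref{thm:hd} and~\ref{thm:hd2} in the case that the curve is elliptic'' and that ``the proof that $h_P$ has the form claimed is an easy calculation, which is left to the reader.'' You have supplied that calculation for~(i) via the telescoping argument, and your reductions of (ii)--(v) to Theorem~\ref{thm:hd} are exactly the intended specializations.

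For part~(vi), your argument is correct but can be shortened. In the subcase $P\notin E(\Fq)$ with $k\neq 0$, you re-derive by hand a special instance of Theorem~\ref{thm:hd2}(ii): that theorem already tells you that $w(D)=\varphi^k(D)$ with $k\neq 0$ forces $D\in\pic[2](\Fq)$, i.e.\ $P\in E[2](\Fq)$, immediately contradicting $P\notin E(\Fq)$. Invoking it directly collapses that branch to one line. The genuinely new content in your argument is the residual case $k=0$ with $P\notin E(\Fq)$, where you correctly bound $n\leq 3$ by counting $2$-torsion and then dispose of $n=3$ using the exact degree $\deg h_{P,2}=0$ from part~(iv); this is the right idea and is not covered by Theorem~\ref{thm:hd2}(ii).
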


Since the exact degree of $h_{P,2}$ is known, $h_P$ can be normalized by making $h_{P,2}$ monic, as in Corollary~\ref{cor:repr}. One obtains the following optimal representation for trace zero points on an elliptic curve.

\begin{corollary}\label{cor:repr_ec}
Let $n \geq 3$ prime, let $d_1 = (n-1)/2, d_2 = (n-3)/2$. Write
$ h_{P,1} = \gamma_{d_1}x^{d_1} + \ldots + \gamma_0$ and $h_{P,2} = x^{d_2}+\beta_{d_2-1}x^{d_2-1}+\ldots+\beta_0$. Define
$$\begin{array}{rcl}
\mathcal{R} : T_n\setminus\{\O\} & \longrightarrow & \mathbb{F}_q^{n-1}\\
\mbox{$P$} & \longmapsto & (\gamma_0,\ldots,\gamma_{d_1},\beta_0,\ldots,\beta_{d_2-1}).
\end{array}$$
Then $$\mathcal{R}^{-1}(\mathcal{R}(P))=\{P,\varphi(P),\ldots,\varphi^{n-1}(P)\} \mbox{ for all } P\in T_n\setminus\{\O\}$$
and $\mathcal{R}$ yields an optimal representation for the family $\T_{n,1}$. \end{corollary}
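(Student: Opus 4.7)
The plan is to specialize Corollary \ref{cor:repr} to the elliptic case $g=1$, sharpening the generic inequality $\deg h_{D,2}\leq d_2$ to the equality $\deg h_{P,2}=d_2$ supplied by Corollary \ref{cor:hp}(iv). First I would observe that any $P\in T_n\setminus\{\O\}$ corresponds to the reduced divisor $P-\O$, so $r=1=g$; this forces the bit $\delta$ of Corollary \ref{cor:repr} to be $1$ identically on $T_n\setminus\{\O\}$, which is why it may be dropped from the representation without loss of information.

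Next I would use the exact degree statement of Corollary \ref{cor:hp}(iv) to normalize $h_P$ by requiring $h_{P,2}$ to be monic. Because $h_P$ is defined only up to a multiplicative constant and $h_{P,2}$ has a genuine leading coefficient in $\Fq$, this normalization is unambiguous and pins down both $h_{P,1}$ and $h_{P,2}$ uniquely. Counting entries, one obtains $d_1+1+d_2=(n-1)/2+1+(n-3)/2=n-1$ coordinates in $\Fq$. Optimality of $\mathcal{R}$ then follows from Remark \ref{opt_q} applied with $m=n-1$ and $k=0$, since $|T_n|=\Theta(q^{n-1})$.

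It remains to describe the fibres. The inclusion $\{P,\varphi(P),\ldots,\varphi^{n-1}(P)\}\subseteq\mathcal{R}^{-1}(\mathcal{R}(P))$ follows from the Frobenius-invariance of the trace: since $\Tr(\varphi^j(P))=\Tr(P)$, the functions $h_{\varphi^j(P)}$ and $h_P$ have the same principal divisor and hence agree after the common normalization, so $\mathcal{R}(\varphi^j(P))=\mathcal{R}(P)$. The reverse inclusion is exactly Corollary \ref{cor:hp}(v): if $\mathcal{R}(Q)=\mathcal{R}(P)$ then $h_Q=h_P$, whence $Q=\varphi^j(P)$ for some $j$. The one step doing real work — and which makes the elliptic argument cleaner than its hyperelliptic analogue — is precisely the exact degree equality $\deg h_{P,2}=d_2$ in Corollary \ref{cor:hp}(iv); without it the monic normalization would be ambiguous and neither the bit-saving nor the sharp fibre description would go through. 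Once this equality is in hand, the rest of the proof is bookkeeping.
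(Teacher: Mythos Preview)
Your proposal is correct and matches the paper's approach: the paper gives no formal proof of this corollary, only the remark preceding it that since $\deg h_{P,2}$ is known exactly one may normalize $h_{P,2}$ to be monic as in Corollary~\ref{cor:repr}, and your argument is a faithful elaboration of that remark together with Corollary~\ref{cor:hp}(v) for the fibre description. One minor overstatement: the exact degree of $h_{P,2}$ is already guaranteed by Theorem~\ref{thm:hd}(iii) whenever $r$ is odd and $n\neq 2$, so it is not special to the elliptic case but simply an instance of the general odd-$g$, $r=g$ situation (cf.\ Remark~\ref{rmk:opensubset1}); the genuine simplification for $g=1$ is that $r=g$ holds for \emph{every} nonzero $P$, which is what lets you drop $\delta$.
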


One also can give simplified compression and decompression algorithms.

\begin{algorithm}
\begin{algorithmic}[1]
\caption{Compression for elliptic curves, $n\geq 3$}
\label{algo:compression}
 \Require $P \in T_n$
 \Ensure representation $(\alpha_0,\ldots,\alpha_{n-2}) \in \F_q^{n-1}$ of $P$
 \State compute $h_P(x,y) = h_{P,1}(x) + y h_{P,2}(x) \gets \frac{\ell_1 \cdot \ldots \cdot \ell_{n-2}}{v_1 \cdot \ldots \cdot v_{n-3}}(x,y)$ (see Algorithm \ref{algo:miller}) where
 \State $h_{P,1}(x) = \gamma_{d_1}x^{d_1} + \ldots +\gamma_0$ and 
 \State $h_{P,2}(x) = x^{d_2}+\beta_{d_2-1}x^{d_2-1}+\ldots+\beta_0$
 \State \Return $(\gamma_0,\ldots,\gamma_{d_1},\beta_0,\ldots,\beta_{d_2-1})$
\end{algorithmic}
\end{algorithm}

\begin{algorithm}
\begin{algorithmic}[1]
\caption{Decompression for elliptic curves, $n \geq 3$}
\label{algo:decompression}
 \Require $(\alpha_0,\ldots,\alpha_{n-2}) \in \F_q^{n-1}$
 \Ensure one point $P \in T_n \setminus \{\O\}$ with representation $(\alpha_0,\ldots,\alpha_{n-2})$
 \State $h_{P,1}(x) \gets \alpha_{(n-1)/2} x^{(n-1)/2} + \alpha_{(n-3)/2} x^{(n-3)/2} + \ldots + \alpha_{1} x + \alpha_{0}$
 \State $h_{P,2}(x) \gets x^{(n-3)/2} + \alpha_{n-2} x^{(n-5)/2} + \ldots + \alpha_{(n+3)/2} x + \alpha_{(n+1)/2}$
 \State $H_P(x) \gets h_{P,1}(x)^2 - f(x) h_{P,2}(x)^2$
 \State $X \gets$ one root of $H_P(x)$
 \State $Y \gets -h_{P,1}(X)/h_{P,2}(X)$
 \State \Return $P=(X,Y)$
\end{algorithmic}
\end{algorithm}

Finally, we discuss how to compute $h_P$ for different values of $n$. Explicit formulas can be computed in the special cases $n=3,5$. We do this in Appendix~\ref{app:explicitequations}. For general $n$, a straightforward computation of $h_P$ is possible, since Corollary~\ref{cor:hp} contains an explicit formula given in terms of lines. Such a computation can be made more efficient by employing the usual divide and conquer strategy. Computing $h_P$ via a Miller-style algorithm analogous to Algorithm \ref{algo:millercantor} is also possible. The latter is advantageous for medium and large values of $n$, while for small values of $n$ a straightforward computation using a divide and conquer approach seems preferable (unless explicit formulas are available). According to our experiments, a Miller-style algorithm behaves better than the obvious way of computing $h_P$ (i.e.\ iteratively multiplying by $\frac{\ell_i}{v_{i-1}}$) for $n > 10$, and better than a divide and conquer approach for $n > 20$.

We denote by $\ell_{P,Q}$ the line through the points $P$ and $Q$, and by $v_P$ the vertical line through $P$. All computations are done with functions on $E$, i.e.\ in $\Fqn(E)$.

\begin{algorithm}
\begin{algorithmic}[1]
\caption{Miller-style double and add algorithm for computing $h_P, n \geq 3$}
\label{algo:miller}
 \Require $P \in T_n \setminus \{\O\}$ and $n-1 = \sum_{j=0}^s n_j 2^j$
 \Ensure $h_P$
 \State $Q \gets \varphi(P)$
 \State $h \gets \ell_{P,Q},~ R \gets P\oplus Q,~ Q \gets \varphi(Q),~ i \gets 2$
 \If {$n_{s-1} = 1$}
    \State $h \gets h \cdot \frac{\ell_{R,Q}}{v_R},~ R \gets R\oplus Q,~ Q \gets \varphi(Q),~ i \gets 3$
 \EndIf
 \For {$j = s-2, s-3, \ldots,1, 0$}
   \State $h \gets h \cdot h^{\varphi^i} \cdot \frac{v_{R + \varphi^i(R)}}{\ell_{w(R),w(\varphi^i(R))}},~ R \gets R \oplus \varphi^i(R),~ Q \gets \varphi^i(Q),~ i \gets 2i$
   \If {$n_j = 1$}
     \State $h \gets h \cdot \frac{\ell_{R,Q}}{v_R},~ R \gets R\oplus Q,~ Q \gets \varphi(Q),~ i \gets i+1$
   \EndIf
 \EndFor
 \State \Return $h$
\end{algorithmic}
\end{algorithm}

\section{Timings and comparison with other representations}\label{sec:timings}

This new representation applies to any prime $n$ and any genus, and it can be made practical for very large values of $n$ and/or $g$.
Moreover our decompression algorithm allows the unique recovery of one well-defined class of conjugates of the original point. For elliptic curves, such a class consists exactly of the Frobenius conjugates of the original point, and for higher genus curves, classes are as described in Theorem~\ref{thm:hd} (iv). Identifying these conjugates is the natural choice from a mathematical point of view, since it respects the structure of our object and is compatible with scalar multiplication. 

There are only three other known methods for point compression in trace zero varieties over elliptic curves, namely \cite{naumann}, \cite{silverberg-05}, and \cite{gorla-massierer-1}. While \cite{naumann} only applies to extension degree 3, \cite{silverberg-05, gorla-massierer-1} can be made practical for $n=3,5$. The approach of \cite{gorla-massierer-1} allows unique recovery of an equivalence class for $n=3$ and for most points for $n=5$. The compression method of \cite{silverberg-05} identifies sets of points which are incompatible with scalar multiplication, thus requiring extra bits to resolve ambiguity. There is only one known method for point compression in trace zero varieties over hyperelliptic curves from \cite{lange-04}. This method can be made practical for the parameters $g=2, n=3$. 

One advantage of our representation with respect to the previous ones is that it is the only one that does not identify the positive and negative of a point, thus allowing a recovery of the $y$-coordinate of a compressed point that does not require computing square roots. For small values of $n$, this gives a noticeable advantage in efficiency. In addition, our method works for all affine points on  the trace zero variety, without having to disregard a closed subset as is done in \cite{silverberg-05, lange-04}. In addition, our compression and decompression algorithms do not require a costly precomputation, such as that of the Semaev polynomial in \cite{gorla-massierer-1} or the elimination of variables from a polynomial system in \cite{lange-04}. 

In terms of efficiency, our compression algorithm is slower than all the other ones for elliptic curves, but our decompression algorithm is faster in all cases. For $g=1$, the time for compression and decompression together is comparable for $n=3$, and smaller for $n=5$, than that of \cite{gorla-massierer-1}. That is to say, the faster decompression makes up for the slower compression. Although in this paper we concentrate on the case of odd characteristic, our method can be adapted to fields of even characteristic, just like all other methods from \cite{gorla-massierer-1,silverberg-05,lange-04,naumann}.

We now compare the efficiency of our algorithms with those of \cite{gorla-massierer-1,silverberg-05,lange-04,naumann} in more detail. The comparison of our method with that of \cite{gorla-massierer-1} is on the basis of a precise operation count, complexity analysis, and our own Magma implementations. Notice that our programs are straightforward implementations of the methods described here and in \cite{gorla-massierer-1}, and they are only meant as an indication. No particular effort has been put into optimizing them, and clearly a special purpose implementation (e.g.\ choosing $q$ of a special shape) would produce better and more meaningful results. All computations were done with Magma version 2.19.3 \cite{magma}, running on one core of an Intel Xeon Processor X7550 (2.00 GHz). Our timings are average values for one execution of the algorithm, where averages are computed over $10 000$ executions with random inputs. Our comparison with \cite{naumann,silverberg-05,lange-04} is rougher, since no precise operation counts, complexity analyses or implementations of those methods are available. 

\medskip \noindent {\bf Comparison and Timings for $g=1, n=3$.}
We compare our method with the most efficient method from \cite{gorla-massierer-1} (there called ``compression in $t_i$'') in terms of operations in Table \ref{operations3} and timings in Table \ref{timings3}. We choose arbitrary elliptic curves such that the associated trace zero subgroups have prime order for fields of 20, 40, 60, and 79 bits. We see that the compression algorithm from~\cite{gorla-massierer-1} requires fewer operations, but we could not observe a significant difference in the timings (probably due to insufficient accuracy of our tests). 
For the decompression algorithm, we compare ``full decompression'', where one entire point (including the $y$-coordinate) is recomputed. Here, the method of \cite{gorla-massierer-1} is much slower (roughly by a factor 10), due to the necessary square root extraction. This shows one major efficiency advantage of the approach that we follow in this paper: Recovering the $y$-coordinate is much faster, since no square root computation is necessary. For a different point of view, we also compare ``decompression in $x$ only'', where no $y$-coordinate is computed. In this case, the algorithm proposed in this paper and the one from \cite{gorla-massierer-1} behave similarly.

In \cite{silverberg-05}, compression is free. The bulk of the work in the decompression algorithm is factoring a degree 4 polynomial and recomputing the $y$-coordinate from the curve equation (which requires a square root extraction). This is clearly more expensive than the decompression algorithm in this paper, which does not require polynomial factorization or square root extraction. We refer to~\cite[Section 5]{gorla-massierer-1} for a detailed discussion of the decompression algorithm from~\cite{silverberg-05}.

Naumann \cite{naumann} does not give explicit compression or decompression algorithms, but he derives an equation for the trace zero subgroup that might be used for such. The equation is in the Weil restriction coordinates $x_0,x_1,x_2$ of the $x$-coordinate of a trace zero point, and it has degree 4 in $x_0$ and degree 3 in $x_1, x_2$. Therefore, it allows a representation in the coordinates $(x_0,x_1)$ or $(x_0,x_2)$, where decompression could be done by factoring a cubic polynomial in the missing coordinate, and then recomputing the $y$-coordinate as a square root. Again, this is clearly more expensive than the decompression algorithm in this paper.

\begin{table}
\caption{Number of operations in $\Fq$ for compression/decompression of one point when $g=1,n=3$}
\label{operations3}
\begin{tabular}{l|l}
\hline\noalign{\smallskip}
Compression                                   & 2S+6M+1I\\
Compression \cite{gorla-massierer-1}            & 1M\\
Full decompression                            & 5S+5M+1I, 1 square root, 2 cube roots\\
Full decompression \cite{gorla-massierer-1}     & 4S+3M+2I, 1 square root, 2 cube roots, and 1 square root in $\F_{q^3}$\\
Decompression $x$ only                            & 5S+4M+1I, 1 square root, 2 cube roots\\
Decompression $x$ only \cite{gorla-massierer-1}     & 4S+3M+2I, 1 square root, 2 cube roots\\
\noalign{\smallskip}\hline
\end{tabular}
\end{table}

\begin{table}
\caption{Average time in milliseconds for compression/decompression of one point when $g=1,n=3$}
\label{timings3}
\begin{tabular}{l|llll}
\hline\noalign{\smallskip}
$q$ &  $2^{20}-3$ & $2^{40}-87$ & $2^{60}-93$ & $2^{79}-67$  \\
\noalign{\smallskip}\hline\noalign{\smallskip}
Compression                                    & 0.01 & 0.03 & 0.03 & 0.04\\
Compression \cite{gorla-massierer-1}             & 0.01 & 0.02 & 0.03 & 0.04\\
Full decompression                             & 0.18 & 0.71 & 0.89 & 1.52\\
Full decompression \cite{gorla-massierer-1}      & 0.84 & 7.62 & 10.62 & 17.58\\
Decompression $x$ only                         & 0.15 & 0.63 & 0.87 & 1.40\\
Decompression $x$ only \cite{gorla-massierer-1}  & 0.15 & 0.68 & 0.87 & 1.44\\
\noalign{\smallskip}\hline
\end{tabular}
\end{table}

\medskip \noindent {\bf Comparison and Timings for $g=1, n=5$.}
A similar comparison for extension degree 5 (see Tables \ref{operations5} and \ref{timings5}) shows that the compression algorithm proposed in this paper is less efficient than that of \cite{gorla-massierer-1}, but the decompression algorithm is faster. Although the bulk of the work in both decompression algorithms is polynomial factorization, following the approach proposed in this paper we have to factor one polynomial of degree 5 over $\F_{q^5}$, where the algorithm of \cite{gorla-massierer-1} first factors a polynomial of degree 6 over $\Fq$, and then at least one polynomial of degree 5 over $\F_{q^5}$. For this reason, the decompression algorithm proposed in this paper performs better than that of \cite{gorla-massierer-1}, regardless of whether we include the recovery of the $y$-coordinate. Notice that we again compare with the best method from \cite{gorla-massierer-1}, there called ``compression/decompression in the $s_i$ with polynomial factorization''.

In comparison to \cite{silverberg-05}, our compression algorithm is less efficient, but our decompression method is more efficient. The decompression algorithm of Silverberg involves resultant computations and the factorization of a degree 27 polynomial. If one wishes to recover the $y$-coordinate, a square root extraction is also required.  With or withour square root extraction, this is much more expensive than the decompression algorithm in this paper, which does not require polynomial factorization or resultant computations.
We refer to~\cite[Section 6]{gorla-massierer-1} for a detailed analysis of the algorithm from~\cite{silverberg-05}.

\begin{table}
\caption{Number of operations/complexity for compression/decompression of one point when $g=1,n=5$}
\label{operations5}
\begin{tabular}{l|l}
\hline\noalign{\smallskip}
Compression                                   & 3S+18M+3I in $\F_{q^5}$\\
Compression \cite{gorla-massierer-1}            & 5S+13M in $\Fq$\\
Full decompression                            & $O(\log q)$ operations in $\Fq$\\
Full decompression \cite{gorla-massierer-1}     & $O(\log q)$ operations in $\Fq$, and 1 square root in $\F_{q^5}$\\
Decompression $x$ only                            & $O(\log q)$ operations in $\Fq$\\
Decompression $x$ only \cite{gorla-massierer-1}     & $O(\log q)$ operations in $\Fq$\\
\noalign{\smallskip}\hline
\end{tabular}
\end{table}

\begin{table}
\caption{Average time in milliseconds for compression/decompression of one point when $g=1,n=5$}
\label{timings5}
\begin{tabular}{l|llll}
\hline\noalign{\smallskip}
$q$ & $2^{10}-3$ & $2^{20}-5$ & $2^{30}-173$ & $2^{40}-195$  \\
\noalign{\smallskip}\hline\noalign{\smallskip}
Compression                                   & 0.21 & 0.25 & 0.46 & 0.80 \\
Compression \cite{gorla-massierer-1}            & 0.04 & 0.04 & 0.05 & 0.10 \\
Full decompression                            & 0.82 & 9.39 & 4.26 & 10.13 \\
Full decompression \cite{gorla-massierer-1}     & 5.89 & 17.90 & 30.21 & 63.60 \\
Decompression $x$ only                        & 0.77 & 9.36 & 4.01 & 9.82 \\
Decompression $x$ only \cite{gorla-massierer-1} & 5.53 & 16.48 & 21.42 & 45.08 \\
\noalign{\smallskip}\hline
\end{tabular}
\end{table}

\medskip \noindent {\bf Timings for $g=1,n>5$.} We study the performance of our algorithms by means of experimental results for $n > 5$. First, for comparison with the last column of Tables \ref{timings3} and \ref{timings5}, we give in Table \ref{timingsn160} timings for $n = 7, 11, 13, 19, 23$ and corresponding randomly chosen values of $q, A$, and $B$ that produce prime order trace zero subgroups of approximately 160 bits. From the different values for decompression times (due to the fact that the performance of the polynomial factorization algorithm in Magma depends heavily on the specific choice of $q$ and $n$), we see that there is much room for optimization in the choice of these parameters.

In each case, we choose the fastest method of computing $h_P$ during compression. According to our experiments, this is an iterative approach for $n=7$, a divide and conquer approach for $n=11,13,19$, and Algorithm \ref{algo:miller} for $n \geq 23$.  During decompression we compute the $y$-coordinate of the point as well, since the difference with computing the $x$-coordinate only is negligible.

We also report that we are able to apply our method to much larger trace zero subgroups and much larger values of $n$. More specifically, our implementation was tested on trace zero subgroups of more than 3000 bits and for values of $n$ larger than 300. For even larger values of $n$, the limitation is not our compression/decompression approach, but rather the fact that the trace zero subgroup becomes very large, even for small fields.

\begin{table}
\caption{Average time in milliseconds for compression/decompression of one point when $g=1,n>5$, $\log_2 |T_n| \approx 160$}
\label{timingsn160}
\begin{tabular}{l|lllll}
\hline\noalign{\smallskip}
$n$ & $7$ & $11$ & $13$ & $19$ & $23$  \\
$q$ & $2^{27}-27689095$ & $2^{16}-129$ & $2^{14}-6113$ & $2^{9}-55$ & $2^{8}-117$  \\
\noalign{\smallskip}\hline\noalign{\smallskip}
Compression                                   & 1.80 & 2.84 & 3.89 & 8.82 & 12.90\\
Full decompression                        & 20.90 & 10.16 & 4.03 & 119.75 & 58.15\\
\noalign{\smallskip}\hline
\end{tabular}
\end{table}

\medskip \noindent {\bf Comparison and Timings for $g=2,n=3$.} We present timings for trace zero subgroups of 20, 30, 40, 50, 60 bits in Table \ref{timings23}. 
The reason for testing only small groups is that it is difficult to produce larger ones in Magma without writing dedicated code. 
Since our implementation serves mostly as a proof of concept and for comparison purposes, we did not put much effort into producing suitable curves for larger trace zero subgroups. 

The representation of \cite{lange-04} consists of 4 (out of 6) Weil restriction coordinates of the coefficients of the $u$-polynomial of a point, plus two small numbers to resolve ambiguity. Following the notation of the original paper, we call the transmitted coordinates $u_{12},u_{11},u_{10},u_{02}$, the two small numbers $a,b$, and the dropped coordinates $u_{01},u_{00}$.
This approach requires as a precomputation the elimination of 4 variables from a system of 6 equations of degree 3 in 10 variables. The result is a triangular system of 2 equations in 6 indeterminates. The compression algorithm substitutes the values of $u_{12},u_{11},u_{10},u_{02}$ into the system and solves for the two missing values in order to determine $a,b$, which in turn determine the roots coinciding with $u_{01},u_{00}$. The decompression algorithm uses $a,b$ to decide which among the solutions of the system are the coordinates it recovers. The advantage of this algorithm is that it works entirely over $\Fq$. Nevertheless, compression is clearly less efficient than our compression algorithm, since we only need to evaluate a number of expressions, while Lange has to solve a triangular system, which involves computing roots. While our decompression algorithm requires the factorization of one or two polynomials, which has complexity $O(\log q)$, Lange's decompression algorithm solves again the same triangular system. Since this involves computing roots in $\Fq$, which has complexity $O(\log^4 q)$ using standard methods (and can be as low as $O(\log^2 q)$ for special choices of parameters, see \cite{barreto-voloch-04}), it is less efficient than the decompression algorithm proposed in this paper. Notice also that Lange's approach does not give the $v$-polynomial, which needs to be computed separately, adding to the complexity of decompression.

\begin{table}
\caption{Average time in milliseconds for compression/decompression of one point when $g=2,n=3$}
\label{timings23}
\begin{tabular}{l|lllll}
\hline\noalign{\smallskip}
$q$ & $2^5-1$ & $2^8-75$ & $2^{10}-3$ & $2^{13}-2401$ & $2^{15}-19$ \\
\noalign{\smallskip}\hline\noalign{\smallskip}
Compression                                   & 0.10 & 0.11 & 0.19 & 0.19 & 0.17 \\
Full decompression                            & 0.28 & 4.78 & 19.87 & 3.07 & 3.82 \\
\noalign{\smallskip}\hline
\end{tabular}
\end{table}

\medskip \noindent {\bf Timings for $g>2,n>3$.} As a proof of concept, we provide timings in Table \ref{timings160} for trace zero subgroups of approximately 160 bits when $n=5$ and $g = 5,6,\ldots,11$. The reason for this choice is simply that we are able to find suitable curves for these parameters. We stress again that the limitation here is not our compression method, but finding trace zero subgroups of known group order, so we expect that our method will work for much larger values of $n$ and $g$ (e.g.\ we are able to compute an example for $g=2,n=23$, where the group has 173 bits).

\begin{table}
\caption{Average time in milliseconds for compression/decompression of one point when $n=5,g \geq 5$, $\log_2 |T_n| \approx 160$}
\label{timings160}
\begin{tabular}{l|lllllll}
\hline\noalign{\smallskip}
$g$ & $5$ & $6$ & $7$ & $8$ & $9$ & $10$ & $11$ \\
$q$ & $2^{8}-5$ & $2^{7}-27$ & $2^{6}-23$ & $2^{5}-1$ & $2^{4}-5$ & $2^{4}-5$ & $2^{4}-5$ \\
\noalign{\smallskip}\hline\noalign{\smallskip}
Compression                                   & 6.53 & 7.48 & 9.89 & 11.83 & 1.90 & 2.93 & 3.24\\
Full decompression                            & 4.35 & 13.91 & 12.61 & 10.27 & 29.30 & 33.83 & 42.97\\
\noalign{\smallskip}\hline
\end{tabular}
\end{table}

\section{Conclusion} \label{sec:conclusion}

In this paper, we propose a representation of elements of the trace zero subgroup via rational functions. To the extent of our knowledge, this representation is the only one that applies to elliptic and hyperelliptic curves of any genus and field extensions of any prime degree. Our representation has convenient mathematical properties: It identifies well-defined classes of points, it is compatible with scalar multiplication, and it does not discard the $v$-polynomial of the Mumford representation (or the $y$-coordinate of an elliptic curve point), thus saving expensive square root computations in the decompression process.

Our compression and decompression algorithms are efficient, even for medium to large values of $n$ and $g$. For those parameters where other compression methods are available (namely, for very small $n$ and $g$), our algorithms are comparable with or more efficient than the previously known ones, if compression and decompression are considered together. No costly precomputation is required during the setup of the system.


{
\bibliographystyle{amsalpha}  
\bibliography{lit}  

\newcommand{\etalchar}[1]{$^{#1}$}
\providecommand{\bysame}{\leavevmode\hbox to3em{\hrulefill}\thinspace}
\providecommand{\MR}{\relax\ifhmode\unskip\space\fi MR }
\providecommand{\MRhref}[2]{%
  \href{http://www.ams.org/mathscinet-getitem?mr=#1}{#2}
}
\providecommand{\href}[2]{#2}
\begin{thebibliography}{vDGP{\etalchar{+}}05}

\bibitem[AC07]{avanzi-cesena-07}
R.~M. Avanzi and E.~Cesena, \emph{Trace zero varieties over fields of
  characteristic 2 for cryptographic applications}, Proceedings of the First
  Symposium on Algebraic Geometry and Its Applications (SAGA '07), 2007,
  pp.~188--215.

\bibitem[ACD{\etalchar{+}}06]{handbook-hecc}
R.~Avanzi, H.~Cohen, C.~Doche, G.~Frey, T.~Lange, K.~Nguyen, and
  F.~Vercauteren, \emph{Handbook of elliptic and hyperelliptic curve
  cryptography}, Discrete Mathematics and its Applications, Chapman \&
  Hall/CRC, Boca Raton, 2006.

\bibitem[BCHL13]{bos-costello-hisil-lauter-13}
J.~W. Bos, C.~Costello, H.~Hisil, and K.~Lauter, \emph{High-performance scalar
  multiplication using 8-dimensional {GLV}/{GLS} decomposition}, Cryptographic
  Hardware and Embedded Systems -- CHES 2013 (G.~Bertoni and J.-S. Coron,
  eds.), LNCS, vol. 8086, Springer, 2013, pp.~331--338.

\bibitem[BCP97]{magma}
W.~Bosma, J.~Cannon, and C.~Playoust, \emph{The {M}agma algebra system. {I}.
  {T}he user language}, J. Symbolic Comput. \textbf{24} (1997), 235--265.

\bibitem[BDL{\etalchar{+}}12]{ed25519}
D.~J. Bernstein, N.~Duif, T.~Lange, P.~Schwabe, and B.-Y. Yang,
  \emph{High-speed high-security signatures}, J. Cryptogr. Eng. \textbf{2}
  (2012), no.~2, 77--89.

\bibitem[Bla02]{blady}
G.~Blady, \emph{{D}ie {W}eil-{R}estriktion elliptischer {K}urven in der
  {K}ryptographie}, Master's thesis, Universit\"at GHS Essen, 2002.

\bibitem[BV06]{barreto-voloch-04}
P.~S. L.~M. Barreto and J.~S. Voloch, \emph{Efficient computation of roots in
  finite fields}, Des. Codes Crytogr. \textbf{39} (2006), no.~2, 275--280.

\bibitem[Can87]{cantor-87}
D.~G. Cantor, \emph{Computing in the {J}acobian of a hyperelliptic curve},
  Math. Comp. \textbf{48} (1987), no.~177, 95--101.

\bibitem[Ces08]{cesena-06}
E.~Cesena, \emph{Pairing with supersingular trace zero varieties revisited},
  Available at \url{http://eprint.iacr.org/2008/404}, 2008.

\bibitem[Ces10]{cesena-10}
\bysame, \emph{Trace zero varieties in pairing-based cryptography}, Ph.D.
  thesis, Universit\`a degli studi Roma Tre, Available at
  \url{http://ricerca.mat.uniroma3.it/dottorato/Tesi/tesicesena.pdf}, 2010.

\bibitem[Die03]{diem-ghs}
C.~Diem, \emph{The {GHS} attack in odd characteristic}, Ramanujan Math. Soc.
  \textbf{18} (2003), no.~1, 1--32.

\bibitem[Die11]{diem-11}
\bysame, \emph{On the discrete logarithm problem in class groups of curves},
  Math. Comp. \textbf{80} (2011), 443--475.

\bibitem[DS]{diem-scholten}
C.~Diem and J.~Scholten, \emph{An attack on a trace-zero cryptosystem},
  Available at \url{http://www.math.uni-leipzig.de/diem/preprints}.

\bibitem[EGO11]{eagle-galbraith-ong-11}
P.~N.~J. Eagle, S.~D. Galbraith, and J.~Ong, \emph{Point compression for
  {K}oblitz curves}, Adv. Math. Commun. \textbf{5} (2011), no.~1, 1--10.

\bibitem[EGT11]{enge-gaudry-thome-11}
A.~Enge, P.~Gaudry, and E.~Thom\'e, \emph{An ${L}(1/3)$ discrete logarithm
  algorithm for low degree curves}, J. Cryptology \textbf{24} (2011), 24--41.

\bibitem[FHLS14]{faz-hernandez-13}
A.~Faz-Hern\'andez, P.~Longa, and A.~H. S\'anchez, \emph{Efficient and secure
  algorithms for {GLV}-based scalar multiplication and their implementation on
  {GLV}-{GLS} curves}, Topics in cryptology CT-RSA 2014, LNCS, vol. 8366,
  Springer, 2014, pp.~1--27.

\bibitem[Fre99]{frey-99}
G.~Frey, \emph{Applications of arithmetical geometry to cryptographic
  constructions}, Proceedings of the 5th International Conference on Finite
  Fields and Applications, Springer, 1999, pp.~128--161.

\bibitem[Gau07]{gaudry-07}
P.~Gaudry, \emph{Fast genus 2 arithmetic based on {T}heta functions}, J. Math.
  Cryptol. \textbf{1} (2007), 243--265.

\bibitem[Gau09]{gaudry-09}
\bysame, \emph{Index calculus for abelian varieties of small dimension and the
  elliptic curve discrete logarithm problem}, J. Symbolic Comput. \textbf{44}
  (2009), no.~12, 1690--1702.

\bibitem[GH99]{gong-harn-99}
G.~Gong and L.~Harn, \emph{Public-key cryptosystems based on cubic finite field
  extensions}, {IEEE} Trans. Inform. Theory \textbf{45} (1999), no.~7,
  2601--2605.

\bibitem[GHS02]{ghs}
P.~Gaudry, F.~Hess, and N.P. Smart, \emph{Constructive and destructive facets
  of {W}eil descent}, J. Cryptology \textbf{15} (2002), no.~1, 19--46.

\bibitem[GL09]{galbraith-lin-09}
S.~D. Galbraith and X.~Lin, \emph{Computing pairings using $x$-coordinates
  only}, Des. Codes Crytogr. \textbf{50} (2009), no.~3, 305--324.

\bibitem[GLS11]{gls}
S.~D. Galbraith, X.~Lin, and M.~Scott, \emph{Endomorphisms for faster elliptic
  curve cryptography on a large class of curves}, J. Cryptology \textbf{24}
  (2011), no.~3, 446--469.

\bibitem[GLV01]{glv}
R.~P. Gallant, R.~J. Lambert, and S.~A. Vanstone, \emph{Faster point
  multiplication on elliptic curves with efficient endomorphisms}, Advances in
  Cryptology: Proceedings of CRYPTO '01 (J.~Kilian, ed.), LNCS, vol. 2139,
  Springer, 2001, pp.~190--200.

\bibitem[GM15a]{gorla-massierer-3}
E.~Gorla and M.~Massierer, \emph{Index calculus in the trace zero variety},
  Adv. Math. Commun. \textbf{9} (2015), no.~4, 515--539.

\bibitem[GM15b]{gorla-massierer-1}
\bysame, \emph{Point compression for the trace zero subgroup over a small
  degree extension field}, Des. Codes Crytogr. \textbf{75} (2015), no.~2,
  335--357.

\bibitem[HSS01]{hess-seroussi-smart-01}
F.~Hess, G.~Seroussi, and N.~P. Smart, \emph{Two topics in hyperelliptic
  cryptography}, Proceedings of SAC '01 (S.~Vaudenay and A.~M. Youssef, eds.),
  LNCS, vol. 2259, Springer, 2001, pp.~181--189.

\bibitem[Kar10]{karabina-10}
K.~Karabina, \emph{Factor-4 and 6 compression of cyclotomic subgroups of
  $\mathbb{F}_{2^{4m}}^*$ and $\mathbb{F}_{3^{6m}}^*$}, J. Math. Cryptol.
  \textbf{4} (2010), no.~1, 1--42.

\bibitem[Kar12]{karabina-12}
\bysame, \emph{Torus-based compression by factor 4 and 6}, {IEEE} Trans.
  Inform. Theory \textbf{58} (2012), no.~5, 3293--3304.

\bibitem[Kob91]{koblitz-91}
N.~Koblitz, \emph{C{M}-curves with good cryptographic properties}, Advances in
  Cryptology: Proceedings of CRYPTO '91 (J.~Feigenbaum, ed.), LNCS, vol. 576,
  Springer, 1991, pp.~179--287.

\bibitem[Lan01]{lange-phd}
T.~Lange, \emph{Efficient arithmetic on hyperelliptic curves}, Ph.D. thesis,
  Univerit\"at GHS Essen, Available at
  \url{http://www.hyperelliptic.org/tanja/preprints.html}, 2001.

\bibitem[Lan04]{lange-04}
\bysame, \emph{Trace zero subvarieties of genus 2 curves for cryptosystem},
  Ramanujan Math. Soc. \textbf{19} (2004), no.~1, 15--33.

\bibitem[Lan05]{lange-03}
\bysame, \emph{Formulae for arithmetic on genus 2 hyperelliptic curves}, Appl.
  Algebra Engrg. Comm. Comput. \textbf{15} (2005), 295--328.

\bibitem[LS12]{longa-sica-12}
P.~Longa and F.~Sica, \emph{Four-dimensional {G}allant--{L}ambert--{V}anstone
  scalar multiplication}, Advances in Cryptology: Proceedings of ASIACRYPT '12
  (X.~Wang and K.~Sako, eds.), LNCS, vol. 7658, Springer, 2012, pp.~718--739.

\bibitem[LV00]{xtr}
A.~K. Lenstra and E.~R. Verheul, \emph{The {XTR} public key system}, Advances
  in Cryptology: Proceedings of CRYPTO '00 (M.~Bellare, ed.), LNCS, vol. 1880,
  Springer, 2000, pp.~1--19.

\bibitem[LW54]{lang-weil-54}
S.~Lang and A.~Weil, \emph{Number of points of varieties in finite fields},
  Amer. J. Math. \textbf{76} (1954), no.~4, 819--827.

\bibitem[Mil04]{miller-04}
V.~S. Miller, \emph{The {W}eil pairing, and its efficient calculation}, J.
  Cryptology \textbf{17} (2004), no.~4, 235--261.

\bibitem[Mon87]{montgomery-87}
P.~L. Montgomery, \emph{Speeding the {P}ollard and elliptic curve methods of
  factorization}, Math. Comp. \textbf{48} (1987), no.~177, 243--264.

\bibitem[Nau99]{naumann}
N.~Naumann, \emph{{W}eil-{R}estriktion abelscher {V}ariet\"aten}, Master's
  thesis, Univerit\"at GHS Essen, Available at
  \url{http://web.iem.uni-due.de/ag/numbertheory/dissertationen}, 1999.

\bibitem[OLAR13]{lambda}
T.~Oliveira, J.~L\'opez, D.~F. Aranha, and F.~{Rodr\'iguez-Henr\'iquez},
  \emph{Lambda coordinates for binary elliptic curves}, Cryptographic Hardware
  and Embedded Systems -- CHES 2013 (G.~Bertoni and J.-S. Coron, eds.), LNCS,
  vol. 8086, Springer, 2013, pp.~311--330.

\bibitem[RS02]{rubin-silverberg-02}
K.~Rubin and A.~Silverberg, \emph{Supersingular abelian varieties in
  cryptology}, Advances in Cryptology: Proceedings of CRYPTO '02 (M.~Yung,
  ed.), LNCS, vol. 2442, Springer, 2002, pp.~336--353.

\bibitem[RS03]{rubin-silverberg-03}
\bysame, \emph{Torus-based cryptography}, Advances in Cryptology: Proceedings
  of CRYPTO '03 (D.~Boneh, ed.), LNCS, vol. 2729, Springer, 2003, pp.~349--365.

\bibitem[RS04]{rubin-silverberg-04}
\bysame, \emph{Using primitive subgroups to do more with fewer bits},
  Algorithmic Number Theory (ANTS VI) (Berlin--Heidelberg--New York) (D.~Buell,
  ed.), LNCS, vol. 3076, Springer, 2004, pp.~18--41.

\bibitem[RS08]{rubin-silverberg-08}
\bysame, \emph{Compression in finite fields and torus-based cryptography}, SIAM
  Journal on Computing \textbf{37} (2008), no.~5, 1401--1428.

\bibitem[RS09]{rubin-silverberg-09}
\bysame, \emph{Using abelian varieties to improve pairing-based cryptography},
  J. Cryptology \textbf{22} (2009), no.~3, 330--364.

\bibitem[SHH{\etalchar{+}}08]{shirase-han-hibin-kim-takagi-08}
M.~Shirase, D.~Han, Y.~Hibino, H.~Kim, and T.~Takagi, \emph{A more compact
  representation of {XTR} cryptosystem}, IEICE Trans. Fundamentals
  \textbf{E91-A} (2008), no.~10, 2843--2850.

\bibitem[Sil05]{silverberg-05}
A.~Silverberg, \emph{Compression for trace zero subgroups of elliptic curves},
  Trends Math. \textbf{8} (2005), 93--100.

\bibitem[SS95]{luc}
P.~Smith and C.~Skinner, \emph{A public-key cryptosystem and a digital
  signature system based on the {L}ucas function analogue to discrete
  logarithms}, Advances in Cryptology: Proceedings of ASIACRYPT '94
  (J.~Pieprzyk and R.~Safavi-Naini, eds.), LNCS, vol. 917, Springer, 1995,
  pp.~357--364.

\bibitem[Sta04]{stahlke-04}
C.~Stahlke, \emph{Point compression on {J}acobians of hyperelliptic curves over
  $\mathbb{F}_q$}, Available at \url{http://eprint.iacr.org/2004/030}, 2004.

\bibitem[vDGP{\etalchar{+}}05]{dgprssw-05}
M.~van Dijk, R.~Granger, D.~Page, K.~Rubin, A.~Silverberg, M.~Stam, and
  D.~Woodruff, \emph{Practical cryptography in high dimensional tori}, Advances
  in Cryptology - EUROCRYPT 2005, Lecture Notes in Computer Science, vol. 3494,
  Springer, 2005, pp.~234--250.

\bibitem[vDW04]{vandijk-woodruff-04}
M.~van Dijk and D.~Woodruff, \emph{Asymptotically optimal communication for
  torus-based cryptography}, Advances in Cryptology -- CRYPTO 2004, Lecture
  Notes in Computer Science, vol. 3152, Springer, 2004, pp.~157--178.

\bibitem[Was08]{washington}
L.~C. Washington, \emph{Elliptic {C}urves: Number {T}heory and {C}ryptography},
  second ed., Discrete Mathematics and its Applications, Chapman \& Hall/CRC,
  Boca Raton--London--New York, 2008.

\bibitem[Wei01]{weimerskirch}
A.~Weimerskirch, \emph{The application of the {M}ordell--{W}eil group to
  cryptographic systems}, Master's thesis, Worcester Polytechnic Institute,
  Available at
  \url{http://www.emsec.rub.de/media/crypto/attachments/files/2010/04/ms_weika.pdf},
  2001.

\bibitem[YIMH12]{yonemura-isogai-muratani-hanatani-12}
T.~Yonemura, T.~Isogai, H.~Muratani, and Y.~Hanatani, \emph{Factor-4 and 6
  (de)compression for values of pairings using trace maps}, Pairing-Based
  Cryptography -- Pairing 2012 (M.~Abdalla and T.~Lange, eds.), LNCS, vol.
  7708, Springer, 2012, pp.~19--34.

\end{thebibliography}
}

\appendix

\section{Explicit equations} \label{app:explicitequations}

We compute explicit equations for compression and decompression for the cases when $g=1$ and $n=3,5$, or $g=2$ and $n=3$. We give explicit formulas for compression, while for decompression we explicitly compute a low degree polynomial, whose roots give the result of the decompression. 

In addition to making the computation more efficient, the results contained in this appendix allow us to perform precise operation counts, and thus to compare our method to the other existing compression methods in Section \ref{sec:timings}. When computing complexities, we count squarings (S), multiplications (M), and inversions (I) in $\Fq$, but not additions or multiplications by constants.

\subsection{Explicit equations for $g=1,n=3$}

In this case $h_P = \ell_1$ is a line through the points $P, \varphi(P), \varphi^2(P)$. 
We assume that $\Fq$ does not have characteristic 2 or 3 and that $E$ is given by an equation in short Weierstrass form
$$ E : y^2 = x^3 + Ax + B. $$
For simplicity, we also assume that $3 \mid q-1$ and write $\F_{q^3} = \Fq[\zeta]/(\zeta^3-\mu)$ as a Kummer extension, where $\mu \in \Fq$ is not a third power. Then $1, \zeta, \zeta^2$ is a basis of $\F_{q^3}|\Fq$. It is highly likely that there exists a suitable $\mu$ of small size, see \cite[Section 3.1]{lange-04}.
When working with a field extension where $3 \nmid q-1$, one may use a normal basis, which yields similar but denser equations.


\medskip \noindent {\bf Compression.} 
If $P = (X,Y) \notin E(\Fq)$, then the equation of $h_P = \ell_1$ is
$$h_P = y +  \gamma_1 x + \gamma_0$$ 
and $\mathcal{R}(P) = (\gamma_0, \gamma_1)\in\mathbb{F}_q^2.$
Let \begin{equation}\label{xywr}
\begin{aligned}
 X & =  X_0 + X_1 \zeta + X_2 \zeta^2  \\
 Y & =  Y_0 + Y_1 \zeta + Y_2 \zeta^2 
\end{aligned}
\end{equation}
then a simple computation yields
\begin{eqnarray*}
  \gamma_1 & = & \frac{c_1 X_1^2 Y_1 + c_2 X_2^2 Y_2}{c_1 X_1^3 + c_2 X_2^3}\\
  \gamma_0 & = & -\gamma_1 X_0 - Y_0,
\end{eqnarray*}
where 
\begin{eqnarray*}
  c_1 & = & 1 - \mu^{(q-1)/3}\\
  c_2 & = & \mu^{1+(q-1)/3} - \mu = - \mu c_1
\end{eqnarray*}are constants and can be precomputed during the setup phase of the algorithm.
Hence compression takes 2S+6M+1I in $\Fq$.

When $P \in E(\Fq)$, the line $\ell_1$ is a tangent and we have
\begin{align*}
 \gamma_1 &= \frac{3X^2+A}{2Y}\\
 \gamma_0 &= -\gamma_1X-Y.
\end{align*}
Notice that such points are in $E[3](\Fq)$ and therefore very few. 

\medskip \noindent {\bf Decompression.} 
This algorithm computes the polynomial $H_P$ and its roots over $\F_{q^3}$. We have
$$ H_P(x) = x^3 - \gamma_1^2x^2 + (A-2\gamma_0\gamma_1)x - \gamma_0^2 + B.$$
Computing the coefficients of $H_P$ therefore takes 2S+1M in $\Fq$.
Since the roots of this polynomial are $X,X^q,X^{q^2}$, and using (\ref{xywr}), we get
\begin{equation}\label{sys}
 \begin{array}{rclcl}
\gamma_1^2 & = & X + X^q + X^{q^2} & = & 3X_0 \\
A-2\gamma_0\gamma_1 & = & X^{1+q} + X^{1+q^2} + X^{q+q^2} & = & 3X_0^2 - 3\mu X_1X_2\\
\gamma_0^2 - B & = & X^{1+q+q^2} & = & X_0^3 - 3 \mu X_0 X_1 X_2 + \mu X_1^3 + \mu^2 X_2^3. 
\end{array} 
\end{equation}
Hence one can solve system (\ref{sys}) over $\Fq$, to recover $(X_0,X_1,X_2)$. Since the solutions of the system are exactly the Frobenius conjugates of $X$, it suffices to find a single solution. This takes at most 3S+3M+1I, one square root, and two cube roots in $\Fq$ (see \cite[Section 5]{gorla-massierer-1}). Notice that, since this system is so simple, this is more efficient than factoring $H_P$ over $\F_{q^3}$.
Finally, $Y = -\gamma_1 X - \gamma_0$, so recomputing one $y$-coordinate takes 1M in $\Fq$, and the other ones can be recovered via the Frobenius map.
In total, decompression takes at most 5S+5M+1I, one square root, and two cube roots in $\Fq$.

\subsection{Explicit equations for $g=1, n=5$ }

We assume that $E$ is given in short Weierstrass form $E: y^2  = x^3+Ax+B$ over a field of characteristic not equal to 2 or 3.

\medskip \noindent {\bf Compression.} 
Let $P = (X,Y) \in T_5$ and denote by $\lambda_1, \lambda_2, \lambda_3$ the slopes of the lines $\ell_1, \ell_2, \ell_3$, respectively. We have
$$ h_P = \frac{\ell_1 \ell_2 \ell_3}{v_1 v_2} = (\gamma_2 x^2 + \gamma_1 x + \gamma_0) + y (x + \beta_0), $$
where
\begin{eqnarray*}
  \gamma_2 & = & -\lambda_1 - \lambda_2 - \lambda_3\\
  \beta_0 & = & -\lambda_2 \gamma_2 + \lambda_1 \lambda_3 - X^{q^2}\\
  \gamma_1 & = & -\lambda_2 \beta_0 - \gamma_2 X^{q^2} + \lambda_1 X + \lambda_3 X^{q^3} - Y - Y^{q^2} - Y^{q^3}\\
  \gamma_0 & = & \gamma_1 (\lambda_2^2 - X^{q^2}) + \gamma_2((X+X^q)(X+X^q-X^{q^2}-2\lambda_1^2+\lambda_2^2) + \lambda_1^4 + A + \lambda_1^2 X^{q^2})\\
  && + \lambda_1 \lambda_2 \lambda_3 (X + X^{q^2} + X^{q^3}) - \lambda_1 \lambda_2 Y^{q^3} - \lambda_1 \lambda_3 Y^{q^2} - \lambda_2 \lambda_3 Y + \lambda_3 \lambda_1^2 \lambda_2^2 + \lambda_1^3 \lambda_2^2 + \lambda_1^2 \lambda_2^3.
\end{eqnarray*}
Computing $\lambda_1, \lambda_2, \lambda_3$ takes a total of 3M+3I in $\F_{q^5}$. Then, $\beta_0,\gamma_0,\gamma_1,\gamma_2$ can be computed with a total of 3S+15M in $\F_{q^5}$. Thus, compression takes a total of 3S+18M+3I in $\F_{q^5}$.

\medskip \noindent {\bf Decompression.} 
We compute
\begin{eqnarray*}
  S_1 & = & \gamma_2^2 - 2 \beta_0\\
  S_2 & = & \beta_0^2+A - 2 \gamma_1 \gamma_2\\
  S_3 & = & \gamma_1^2 +2 \gamma_0 \gamma_2 - 2A \beta_0 - B\\
  S_4 & = & A \beta_0^2 + 2B \beta_0 - 2 \gamma_0 \gamma_1\\
  S_5 & = & \gamma_0^2 - B\beta_0^2
\end{eqnarray*}
using 4S+3M in $\Fq$. Then we factor the polynomial $H_P(x) = x^5 - S_1x^4 + S_2 x^3 - S_3 x^2 + S_4 x - S_5$, which takes $O(\log_2q)$ operations in $\Fq$. Finally, recovering $Y$ costs 1S+3M+1I in $\F_{q^5}$.

\subsection{Explicit equations for $g=2, n=3$}
We assume $2, 3 \nmid |\pic(\F_{q^3})|$ and that the characteristic of $\Fq$ is not equal to 2 or 5. A simple transformation yields a curve equation of the shape
$$ C: y^2 = x^5 + f_3x^3 + f_2x^2 + f_1 x + f_0. $$
We assume that $C$ is given in this form, which slightly simplifies the equations. Formulas for the general case can be worked out similarly.

The trace zero variety of hyperelliptic curves of genus 2, with respect to a degree 3 base field extension, was studied in detail by Lange \cite{lange-phd,lange-04}. One of her results is that the Mumford representation of all non-trivial elements of $T_3$ has a $u$-polynomial of degree 2.

\begin{theorem}[{\cite[Theorem 2.2]{lange-04}}] \label{thm:shape}
Assume that $C$ has genus $2$ and that $2, 3 \nmid |\pic(\F_{q^3})|$. Then all non-trivial elements of $T_3$ are represented by reduced divisors of the form
 $$ P_1 + P_2 - 2\O \notin \Div(\F_{q}), $$
 where $P_1, P_2 \ne \O$ and $P_1 \ne P_2, \varphi(P_2), \varphi^2(P_2)$.
\end{theorem}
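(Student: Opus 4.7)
The plan is built around the observation that $\O$ is a Weierstrass point on our genus $2$ curve $C : y^2 = f(x)$ with $\deg f = 5$, so the Weierstrass semigroup at $\O$ is $\{0,2,4,5,6,\ldots\}$ and $3$ is a gap. Riemann--Roch then yields $\ell(3\O) = \ell(2\O) = 2$, so no rational function on $C$ has polar divisor equal to $3\O$. I would isolate this as a key lemma: for any $P \in C(\Fqbar) \setminus \{\O\}$, the divisor $P + \varphi(P) + \varphi^2(P) - 3\O$ is not principal. The proof is immediate, since a function with this divisor would have to lie in $L(3\O)$ and have pole of order exactly $3$ at $\O$.

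With this lemma in hand, I would classify the reduced representative $D = P_1 + \ldots + P_r - r\O$ of a nontrivial class $[D] \in T_3$ according to $r \in \{0,1,2\}$. The case $r = 0$ is the trivial class, which is excluded. For $r = 1$ with $P_1 \neq \O$, one has $\Tr(D) = P_1 + \varphi(P_1) + \varphi^2(P_1) - 3\O$, with points counted with multiplicity so that the case $P_1 \in C(\Fq)$ is also covered. The key lemma then gives $[\Tr(D)] \neq 0$, contradicting $[D] \in T_3$. Hence $r = 2$ necessarily.

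To verify the remaining conditions when $r = 2$, I would invoke the short exact sequence of Proposition~\ref{prop:exsequence}, which gives $|\pic(\F_{q^3})| = |\pic(\Fq)| \cdot |T_3|$ and therefore $2, 3 \nmid |\pic(\Fq)|$. If $D \in \Div(\Fq)$, then $[\Tr(D)] = 3[D] = 0$, forcing $[D] = 0$ by coprimality to $3$, a contradiction. If $P_1 = P_2$, a direct trace computation yields $\Tr(D) = 2(P_1 + \varphi(P_1) + \varphi^2(P_1) - 3\O)$; equating this class with zero and using $2 \nmid |\pic(\Fq)|$ reduces the problem to the key lemma. The cases $P_1 = \varphi(P_2)$ and $P_1 = \varphi^2(P_2)$ are treated by the same computation, both producing $\Tr(D) = 2(P_2 + \varphi(P_2) + \varphi^2(P_2) - 3\O)$. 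The requirement $P_1, P_2 \neq \O$ is automatic from the definition of a reduced divisor.

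The main obstacle will be recognizing the Weierstrass gap argument as the structural engine of the proof; once the key lemma is isolated, the remainder is a routine case analysis combining Proposition~\ref{prop:exsequence} with the coprimality assumptions.
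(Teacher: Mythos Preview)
Your argument is correct, but note that the paper does not actually prove this theorem: it is quoted from Lange~\cite[Theorem~2.2]{lange-04} and stated without proof. So there is nothing in the present paper to compare your approach against.

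That said, your proof stands on its own. The Weierstrass gap observation that $3$ is a non-value of the pole semigroup at $\O$ (equivalently $\ell(3\O)=\ell(2\O)=2$) is exactly the right engine: it immediately kills the $r=1$ case and, combined with the divisibility hypotheses, the degenerate $r=2$ subcases. Your use of the exact sequence from Proposition~\ref{prop:exsequence} to transfer $2,3\nmid|\pic(\F_{q^3})|$ down to $|\pic(\Fq)|$ is also the clean way to handle the $\Fq$-rational and doubled cases. One small point worth making explicit in the $P_1=P_2$ subcase: the class $[P_1+\varphi(P_1)+\varphi^2(P_1)-3\O]$ lies in $\pic(\Fq)$, not merely in $\pic(\F_{q^3})$, which is what makes the $2\nmid|\pic(\Fq)|$ hypothesis applicable; you use this implicitly but it is the crux of that step.
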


\begin{corollary}\label{cor:shape}
Assume that $C$ has genus $2$ and that $2,3 \nmid |\pic(\F_{q^3})|$. Then all non-trivial elements of $T_3$ are represented by reduced divisors of the form $D=P_1+P_2-2\O \notin \Div(\F_{q}),$
and one of the following mutually exclusive facts holds:
\begin{enumerate}
\item $P_1,P_2\in C(\F_{q^3})\setminus \{\O\}$ and $P_1\in\{ w(\varphi(P_2)), w(\varphi^2(P_2))\}$,
\item $P_1,P_2\in C(\F_{q^3})\setminus \{\O\}$ and $P_1 \ne P_2, \varphi(P_2), \varphi^2(P_2), w(\varphi(P_2)), w(\varphi^2(P_2))$,
\item $P_1\in C(\F_{q^6})\setminus C(\F_{q^3})$ and $P_2=\varphi^3(P_1)$.
\end{enumerate}
Let $[u,v]$ be the Mumford representation of $[D]$. Then in cases (ii) and (iii) the divisor $D+\varphi(D)$ is semi-reduced and $u\nmid h_{D,2}$, in particular $h_{D,2}\neq 0$. 
\end{corollary}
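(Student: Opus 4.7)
The plan is to derive the trichotomy of cases from Theorem~\ref{thm:shape} by analysing the factorization of the $u$-polynomial of $D$ over $\F_{q^3}$, and then to verify the semi-reducedness and non-divisibility claims using the hypothesis $2\nmid|\pic(\F_{q^3})|$ together with Theorem~\ref{thm:hd2}~(i).

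First I would classify. Since $[D]\in\pic(\F_{q^3})$ its $u$-polynomial lies in $\F_{q^3}[x]$ and has degree $2$ by Theorem~\ref{thm:shape}. If $u$ splits over $\F_{q^3}$ then both $P_1,P_2\in C(\F_{q^3})\setminus\{\O\}$, yielding case~(i) or (ii) according to whether $P_1\in\{w(\varphi(P_2)),w(\varphi^2(P_2))\}$ or not. If $u$ is irreducible over $\F_{q^3}$, then its two roots lie in $\F_{q^6}\setminus\F_{q^3}$ and are interchanged by $\varphi^3$, giving case~(iii). Mutual exclusivity is immediate from the field-of-definition conditions. Next I would record the key auxiliary observation: under $2\nmid|\pic(\F_{q^3})|$ no non-trivial Weierstra\ss\ point of $C$ can be $\F_{q^3}$-rational, because any such $P\neq\O$ would contribute a non-trivial $2$-torsion class $[P-\O]\in\pic(\F_{q^3})$.

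To show $D+\varphi(D)=P_1+P_2+\varphi(P_1)+\varphi(P_2)-4\O$ is semi-reduced, the only non-trivial conditions are the four inequalities $P_i\neq w(\varphi(P_j))$ for $i,j\in\{1,2\}$. In case~(ii) the off-diagonal inequalities are precisely the hypothesis together with its $(P_1,P_2)$-symmetric version. The diagonal conditions $P_i\neq w(\varphi(P_i))$, if violated, force $\varphi(P_i)=w(P_i)$, hence $\varphi^2(P_i)=P_i$; combined with $P_i\in C(\F_{q^3})$ and $\gcd(2,3)=1$, this puts $P_i\in C(\F_q)$ and forces $P_i$ to be Weierstra\ss, excluded by the auxiliary observation. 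In case~(iii), setting $P_2=\varphi^3(P_1)$ and using $\varphi^6(P_1)=P_1$, each of the four potential equalities reduces to either $\varphi(P_1)=w(P_1)$ or $\varphi^2(P_1)=w(P_1)$. The first forces $P_1\in C(\F_{q^2})$ and $P_2=\varphi(P_1)=w(P_1)$, so $D\sim 0$, contradicting $D\neq 0$. The second forces $P_1\in C(\F_{q^2})\setminus C(\F_q)$ Weierstra\ss, giving $\varphi(D)=P_1+\varphi(P_1)-2\O=D$ and hence $D\in\Div(\F_q)$, contradicting Theorem~\ref{thm:shape}.

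Finally, for $u\nmid h_{D,2}$, apply Theorem~\ref{thm:hd2}~(i) to each prime component of $D$. In case~(ii) write $u=u_1u_2$ where $u_i$ is the $u$-polynomial of the prime divisor $D_i=P_i-\O$: the condition $u_i\mid h_{D,2}$ is equivalent to $w(P_i)=\varphi^j(P_k)$ for some $j\in\{0,1,2\}$, $k\in\{1,2\}$, but all such equalities are excluded by $D$ being reduced, by the case~(ii) hypothesis (and its $(P_1,P_2)$-symmetric version), and by the absence of $\F_{q^3}$-rational Weierstra\ss\ points. Since $\gcd(u_1,u_2)=1$, this yields $u\nmid h_{D,2}$. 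In case~(iii) $D$ is itself $\F_{q^3}$-prime with $u$-polynomial $u$, so $u\mid h_{D,2}$ iff $w(D)=\varphi^j(D)$ for some $j$, and the same case analysis used for semi-reducedness rules out each possibility. The main obstacle will be the bookkeeping in case~(iii): systematically tracing every combination of orbits of $P_1$ under $w$ and $\varphi$ back either to $D\sim 0$ (whenever an involution pair appears in the support) or to $D\in\Div(\F_q)$ (whenever $P_1$ is forced into $C(\F_{q^2})$), both of which contradict the standing hypotheses of Theorem~\ref{thm:shape}.
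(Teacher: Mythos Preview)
Your argument is essentially correct and follows the same route as the paper: derive the trichotomy from Theorem~\ref{thm:shape} and then verify semi-reducedness and $u\nmid h_{D,2}$ in cases~(ii) and~(iii) by ruling out every relation of the form $w(P_i)=\varphi^k(P_\ell)$ via Theorem~\ref{thm:hd2}~(i). Your treatment of case~(ii) is in fact more careful than the paper's, which jumps from $h_{D,2}(X_j)=0$ directly to $w(P_j)=\varphi^i(P_j)$ without saying why the cross-term $k\neq j$ is already excluded.

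One small omission in case~(iii): Theorem~\ref{thm:hd2}~(i) gives $u\mid h_{D,2}\Leftrightarrow w(D)=\varphi^j(D)$ for some $j\in\{0,1,2\}$, but your semi-reducedness analysis only treats the relations reducing to $w(P_1)=\varphi(P_1)$ or $w(P_1)=\varphi^2(P_1)$. The possibility $j=0$ splits into $w(P_1)=\varphi^3(P_1)$ (excluded since $D$ is reduced) and $w(P_1)=P_1$, i.e.\ $P_1$ Weierstra\ss, which still needs a word. It fits neatly into your framework: a Weierstra\ss\ point $P_1\in C(\F_{q^6})\setminus C(\F_{q^3})$ has $x$-coordinate a root of $f\in\Fq[x]$ whose degree over $\Fq$ divides $6$ but not $3$, hence equals~$2$, so $P_1\in C(\F_{q^2})$ and your ``$D\in\Div(\Fq)$'' mechanism applies (alternatively, $[D]$ is then $2$-torsion in $\pic(\F_{q^3})$, hence trivial). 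The paper is equally casual on this point, writing ``for some $i=1,2$'' without comment.
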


\begin{proof}
It is easy to check that (i)-(iii) are mutually exclusive, and that one must be in one of these situations. We now show that $D+\varphi(D)$ is semi-reduced and $u\nmid h_{D,2}$.
If we are in case (ii), then clearly $D+\varphi(D)$ is semi-reduced. By contradiction assume that $h_{D,2}\equiv 0 \bmod{u}$. Let $P_j=(X_j,Y_j)$, $j=1,2$. $P_j-\O\in\Div(\F_{q^3})$ is a reduced prime divisor. Since $h_{D,2}(X_j)=0$, by Theorem~\ref{thm:hd2} (i) we have $w(P_j)=\varphi^i(P_j)$. Then $X_j\in\F_{q^3}\cap\F_{q^i}=\Fq$ and $Y_j\in\F_{q^3}\cap\F_{q^{2i}}=\Fq$. Hence $D=P_1+P_2-2\O\in\Div(\Fq)$, which contradicts Theorem~\ref{thm:shape}.

Assume now that we are in case (iii). 
Since $D$ is prime, by Theorem~\ref{thm:hd2} (i), $u\mid h_{D,2}$ if and only if $w(D)=\varphi^i(D)$ for some $i=1,2$. By contradiction, assume this is the case. Then either $w(P_1)=\varphi^i(P_1)$ or $w(P_1)=\varphi^{i+3}(P_1)$. Hence $X=X^{q^j}\in\F_{q^6}\cap\F_{q^j}\subseteq\F_{q^2}$ and $Y=-Y^{q^j}\in\F_{q^6}\cap\F_{q^{2j}}\subseteq\F_{q^2}$ for some $j\in\{i,i+3\}$. This shows that $D\in\Div(\F_{q^2})\cap\Div(\F_{q^3})=\Div(\Fq)$, which contradicts Theorem~\ref{thm:shape}. Therefore $u\nmid h_{D,2}$ and $D+\varphi(D)=P_1+\varphi(P_1)+\varphi^3(P_1)+\varphi^4(P_1)-4\O$ is semi-reduced. Notice that $P_1\neq w(\varphi(P_2))$ and $P_2\neq w(\varphi(P_1))$, since $D$ is reduced.
\end{proof}

\medskip \noindent {\bf Compression.} We consider elements $0\neq [D] = [u,v] \in T_3$, $D = P_1 + P_2 - 2\O$ with $P_1\neq w(\varphi(P_2))$, $w(\varphi^2(P_2))$ and $u,u^{\varphi}$ coprime. 
The special cases can be worked out separately, and we do not treat them here.

\begin{proposition}\label{prop:v}
Let $0 \neq [D]=[u,v] \in T_3$,  $D = P_1 + P_2 - 2\O$ with $P_1\neq w(\varphi(P_2))$, $w(\varphi^2(P_2))$ and $\gcd(u,u^{\varphi})=1$. Let $[U,V]$ be the Mumford representation of the semi-reduced divisor $D + \varphi(D)$. Then $$h_D = y - V \mbox{ where } V = su+v,\; s \equiv (v^{\varphi}-v)/u \bmod {u^{\varphi}}.$$ 
\end{proposition}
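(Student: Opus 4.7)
The plan is to verify that $V = su + v$ coincides with the $V$-polynomial in the Mumford representation of the semi-reduced divisor $D + \varphi(D)$, and then to show directly that $\princdiv(y - V) = \Tr(D)$; this forces $h_D = y - V$ once the scalar ambiguity in $h_D$ is pinned down by requiring the coefficient of $y$ to equal $1$.

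First I would verify the identification of $V$ with the Mumford $V$-polynomial of $D + \varphi(D) = [U, V]$, where $U = u u^\varphi$. Since $\gcd(u, u^\varphi) = 1$, the Chinese Remainder Theorem characterizes this $V$ as the unique polynomial of degree less than $4$ satisfying $V \equiv v \bmod u$ and $V \equiv v^\varphi \bmod u^\varphi$. The candidate $V = su + v$ satisfies the first congruence trivially (since $su \equiv 0 \bmod u$) and the second by the defining property of $s$. The required divisibility $U \mid V^2 - f$ then follows from $u \mid v^2 - f$, $u^\varphi \mid (v^\varphi)^2 - f$ (using $f \in \Fq[x]$ so that $f^\varphi = f$), and coprimality.

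Next I would compute $\princdiv(y - V)$. The function $y - V$ has its only pole at $\O$, of order $6$ when $\deg V = 3$, so it has $6$ zeros counted with multiplicity. The congruences above show that $y - V$ vanishes on the support of $D + \varphi(D)$ with the correct multiplicities (read in the quotient rings $\Fqn[x]/(u)$ and $\Fqn[x]/(u^\varphi)$, so that repeated roots are handled automatically), accounting for $4$ of the zeros; hence $\princdiv(y - V) = D + \varphi(D) + E$ with $E = Q_1 + Q_2 - 2\O$ capturing the remaining two zeros. Because $\princdiv(y - V) \sim 0$ and $[D] \in T_3$ yields $D + \varphi(D) + \varphi^2(D) \sim 0$, we obtain $E \sim \varphi^2(D)$. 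Both sides have degree $0$; $\varphi^2(D)$ is reduced by hypothesis on $D$, and $E$ is also reduced, since the only alternative $Q_2 = w(Q_1)$ would force $E \sim 0$ and hence $[\varphi^2(D)] = 0$, contradicting $[D] \neq 0$. By uniqueness of the reduced representative of a divisor class, $E = \varphi^2(D)$, giving $\princdiv(y - V) = \Tr(D) = \princdiv(h_D)$.

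The main obstacle I anticipate is confirming that $\deg V = 3$, which is what makes the pole order of $y - V$ at $\O$ equal $6$ and allows the zero count to match; equivalently, the leading coefficient of $s$ (as an element of $\Fqn[x]$ of degree less than $2$) must be nonzero. Under the hypotheses of the proposition, Corollary~\ref{cor:shape} places us in case (ii) or (iii), where the reduced form of $-\varphi^2(D)$ really is a degree-$2$ divisor; translating this through one Cantor reduction step applied to $[U, V]$, whose output has $U$-polynomial of degree $\deg(V^2 - f) - \deg U$, should force $\deg V = 3$ and rule out the degenerate possibilities.
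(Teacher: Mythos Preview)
Your argument is correct, and the $\deg V=3$ obstacle you flag is genuine; your Cantor-reduction fix works (one step from $[U,V]$ with $\deg U=4$ lands on the reduced representative of $[D+\varphi(D)]=-[\varphi^2(D)]$, whose $u$-polynomial has degree $2$ since $D$ is reduced, forcing $\deg(f-V^2)=6$). The route, however, runs in the opposite direction from the paper's. The paper starts from $h_D$ and invokes the general theory: Theorem~\ref{thm:hd}(iii) with $n=3$, $r=g=2$ gives $\deg h_{D,1}=3$ and $\deg h_{D,2}\le 0$, and Corollary~\ref{cor:shape} supplies $h_{D,2}\ne 0$, so after scaling $h_D=y-\gamma$ with $\deg\gamma=3$. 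Then $\gamma$ interpolates all six points $\varphi^j(P_i)$, in particular the four in $D+\varphi(D)$; since the Mumford $V$ is the unique polynomial of degree $\le 3$ doing that (CRT, using $\gcd(u,u^\varphi)=1$), one gets $\gamma=V$, and the explicit formula follows. You instead start from the explicit $V$, compute $\princdiv(y-V)$ by hand, and identify the leftover effective piece with $\varphi^2(D)$ via uniqueness of reduced representatives. The paper's route is shorter because it cashes in Theorem~\ref{thm:hd} and Corollary~\ref{cor:shape}; yours is more self-contained, at the cost of re-establishing the degree fact that the paper obtains for free.
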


\begin{proof}
The divisor $D + \varphi(D)$ is semi-reduced by Corollary~\ref{cor:shape}.
By Theorem \ref{thm:hd} (iii), we have $h_D = h_{D,1} + yh_{D,2}$ with $\deg h_{D,1} = 3$ and $\deg h_{D,2} \leq 0$. Since $h_{D,2} \ne 0$ by Corollary~\ref{cor:shape}, after multiplication by a constant we have $h_D = y - \gamma(x)$ where $\gamma \in \Fq[x]$ of degree 3. If $P_i = (X_i,Y_i)$, then $h_D(X_i^{q^j},Y_i^{q^j})=0$ and hence $\gamma(X_i^{q^j}) = Y_i^{q^j}$ for $i=1,2$, $j=0,1,2$. But $V$ is the unique polynomial of degree $\leq 3$ with $V(X_i^{q^j}) = Y_i^{q^j}$ for $i=1,2, j=0,1,2$, and therefore $\gamma = V$.

In order to compute $V$, observe that it  is the unique polynomial $V$ of degree $< \deg(u u^{\varphi}) = 4$ such that
$V \equiv v \bmod{u}$ and $V \equiv v^{\varphi} \bmod{u^{\varphi}}. $
Keeping in mind that $u, u^{\varphi}$ are coprime, and using the Chinese Remainder Theorem (or following the explicit formulas in \cite{lange-03}), we get
$$ V = su+v \quad \text{ where } \quad s \equiv (v^{\varphi}-v)/u \bmod {u^{\varphi}}, $$
as claimed.
\end{proof}

Denoting $u(x) = x^2 + u_1x + u_0$ and $v(x) = v_1x + v_0$, we compute the compression $(\beta_0,\gamma_0,\gamma_1,\gamma_2,1)$ of $D$ according to the following formulas. We abbreviate $$ U_0 = u_0-u_0^q,\;\; U_1 = u_1-u_1^q,\;\; V_0 = v_0 - v_0^q,\;\; V_1 = v_1-v_1^q. $$
Then
\begin{eqnarray*}
 d & = & (U_1V_0 - U_0V_1)^{-1} \\
 \beta_0 & = & ((u_0u_1^q-u_0^qu_1)U_1 - U_0^2)d \\
 \gamma_0 & = & ((u_0v_0^q - u_0^qv_0)U_0 + (u_0^qu_1v_0-u_0u_1^qv_0^q - u_0^{q+1}V_1)U_1)d \\
 \gamma_1 & = & ((u_0v_1^q-u_0^qv_1)U_0 + (u_1^qv_0+u_0^qv_1^q)u_1U_1 + (u_0^qu_1 - u_0u_1^q)V_0 + (u_0v_1+u_1v_0^q)(u_1^{2q}-u_1^{q+1}))d \\
 \gamma_2 & = & (((u_1+u_1^q)U_1-U_0)V_0 - (u_0u_1-u_0^qu_1^q)V_1)d.
\end{eqnarray*}
Computing these values in the straightforward way takes 2S+32M+1I in $\F_{q^3}$. This number could probably be optimized by regrouping the terms in a more sophisticated way. 

\medskip \noindent {\bf Decompression.} Since decompression is dominated by factoring polynomials, we do not perform an exact operation count here. The algorithm computes
\begin{eqnarray*}
 S_1 & = & -2\gamma_2 + \beta_0^2\\
 S_2 & = & 2\gamma_1 + \gamma_2^2\\
 S_3 & = & -2\gamma_0-2\gamma_1 \gamma_2 + \beta_0^2f_3\\
 S_4 & = & 2 \gamma_0 \gamma_2 + \gamma_1^2 - \beta_0^2f_2\\
 S_5 & = & -2 \gamma_0 \gamma_1 + \beta_0^2f_1\\
 S_6 & = & \gamma_0^2-\beta_0^2f_0
\end{eqnarray*}
over $\Fq$ to obtain $H_D = x^6 - S_1x^5 + S_2x^4 - S_3x^3 + S_4x^2 - S_5x + S_6$. In almost all cases we are decompressing a divisor of the shape that we consider above for compression. $H_D$ either splits over $\Fq$ into two factors of degree 3, or it is irreducible over $\Fq$. Factoring $H_D$ over $\Fq$ takes $O(\log q)$ operations in $\Fq$. Then we factor either two polynomials of degree 3 over $\F_{q^3}$, or one degree 6 polynomial over $\F_{q^3}$, in $O(\log q)$ operations in $\F_{q^3}$. In all cases, we then compute the corresponding $v$-polynomials. It follows that the overall complexity of decompression is $O(\log q)$ operations in $\Fq$.

\end{document}